\documentclass[a4paper,USenglish,cleveref,autoref,thm-restate,pdfa,
    nolineno, 
]{socg-lipics-v2021}

\hideLIPIcs

\title{\texorpdfstring{$c$}{c}-Packedness versus \texorpdfstring{$\lambda$}{λ}-Low-Density in Geometric Graphs: Theory and Practice}

\author{Gregor Diatzko}{University of Konstanz, Germany}{gregor.diatzko@uni-konstanz.de}{https://orcid.org/0000-0002-0904-4910}{}
\author{Félix Lasseux}{University of Konstanz, Germany \and ENSEIRB, Bordeaux, France}{felix.lasseux@labri.fr}{https://orcid.org/0009-0003-8312-4702}{}

\author{Sabine Storandt}{University of Konstanz, Germany}{sabine.storandt@uni-konstanz.de}{https://orcid.org/0000-0001-5411-3834}{}

\authorrunning{G. Diatzko, F. Lasseux and S. Storandt} 

\Copyright{Gregor Diatzko, Félix Lasseux and Sabine Storandt}

\ccsdesc[500]{Theory of computation~Graph algorithms analysis}

\keywords{Geometric Graph, Balanced Separator, Tree Decomposition, Distance Oracle}

\relatedversion{This is the full version of an article accepted at IPEC 2026.} 

\supplementdetails[linktext={}, subcategory={Source Code}, swhid={swh:1:rev:a646306525afd52f5ce065881fa9e36ba6405d7a}]{Software}{https://gitlab.inf.uni-konstanz.de/felix.lasseux/c-packed_edos} 
\supplementdetails[
    linktext={https://gitlab.inf.uni-konstanz.de/ag-storandt/packedness-vs-density},
    subcategory={Source Code},
    swhid={swh:1:rev:01098294157b00c5692ed933fa6d6b7d3c449aae}]
    {Software}
    {https://gitlab.inf.uni-konstanz.de/ag-storandt/packedness-vs-density}

\EventEditors{Tuukka Korhonen and Michael Lampis}
\EventNoEds{2}
\EventLongTitle{21th International Symposium on Parameterized and Exact Computation (IPEC 2026)}
\EventShortTitle{IPEC 2026}
\EventAcronym{IPEC}
\EventYear{2026}
\EventDate{September 2--4, 2026}
\EventLocation{L’Aquila, Italy}
\EventLogo{}
\SeriesVolume{398}
\ArticleNo{27}

\AddToHook{cmd/appendix/before}{%
  \crefalias{section}{appendix}%
}
\usepackage{siunitx}

\usepackage{todonotes}
\def\O{\mathcal O}
\def\R{\mathbb R}
\def\tw{{\texttt{tw}}}
\def\reed{\textsc{reed}}
\let\epsilon=\varepsilon
\newcommand{\blackdiamond}{%
  \mathord{%
    \sbox0{$\blacksquare$}%
    \resizebox{!}{1.1\ht0}{%
      \rotatebox[origin=c]{45}{$\blacksquare$}%
    }%
  }%
}

\definecolor{kn_bordeaux65}{RGB}{165,77,105}
\definecolor{kn_peach}{RGB}{254,160,144}
\definecolor{kn_seegruen}{RGB}{10,144,134}
\definecolor{kn_seeblau}{RGB}{0,169,224}

\begin{document}

\maketitle

\begin{abstract}
When designing algorithms for geometric graphs, exploiting structural parameters
can lead to significantly improved bounds.
Two prominent parameters in this context are $c$-packedness and
$\lambda$-low density, both of which locally restrict graph complexity.
Parameterized algorithms based on these parameters have been developed for
computing well-separated pair decompositions, balanced separators, as well as
distance oracles.
Despite this theoretical progress, the practical applicability of algorithms
parameterized by $c$ or $\lambda$ remains unclear.
While $c$-packed and $\lambda$-low-density graphs have been proposed as
realistic models for road networks, the actual parameter values of large
real-world instances have so far remained unknown, and existing theoretical
guarantees are partially too loose for practical usage.
In this paper we first devise scalable implementations for the approximate
computation of $c$ and the exact computation of $\lambda$.
Our experimental evaluation on road networks with millions of edges reveals a
significant gap between the two parameters.
On the theoretical side we prove that $c \in \O(\lambda \sqrt n)$
which complements the known result that $\lambda \in \O(c)$.
Furthermore we present improved parameterized algorithms for balanced separator
computation that reduce the separator size in theory and practice.
We also show how to compute a tree decomposition with a width linear in the
respective parameterized balanced separator size in polynomial time.
This structural result yields a variety of new algorithmic consequences.
Among them is an exact distance oracle with query time $\O(c)$ for $c$-packed
graphs after polynomial-time preprocessing, which improves upon the previous
$\O(c \log n)$ bound.
Our experiments demonstrate that the proposed techniques efficiently produce
small balanced separator and enable the construction of concise exact distance oracles on large
road networks.
\end{abstract}

\section{Introduction}
Identifying realistic graph classes that model road networks is an ongoing
endeavor, with the goal to explain empirical observations---such as the
existence of small balanced separators---and to develop efficient algorithms and
data structures with provable bounds.
Parameterized graph classes have been of great interest in this context.
Most of the parameters considered so far, as treewidth
\tw~\cite{bauer2016search}, highway dimension {\tt hd}~\cite{abraham2016highway}
or skeleton dimension {\tt k}~\cite{kosowski2017beyond},
are based on the graph topology and the shortest path structure.
For ${\tt p} \in \{\tw, {\tt hd}, {\tt k}\}$ exact distance oracles with query times in
$\O({\tt p} \log n)$ can be constructed on a given graph $G(V,E)$ with $n$
nodes.
The respective oracle construction algorithms are impractical though, as even
approximate variants have running times that are either exponential in the
parameter ${\tt p}$ or (super-)quadratic in $n$.
Throughput the paper, we call a parametrized algorithm \emph{practical} if it
has a running time of the form $poly({\tt p} ) \cdot \tilde{\mathcal{O}}(n)$, as
this allows it to scale to large graphs even when the parameter value is
moderate.
To achieve such running times, typically  heuristics are  used in implementations of parameterized distance oracles 
instead of exact or approximate algorithms
\cite{gottesburen2019faster,delling2014hub,blum2021sublinear}, potentially
forfeiting their respective theoretical guarantees.

However, recently new parameterized graph classes for road networks have been
studied that solely rely on the geometry of the graph embedding.
Two particularly successful notions are $c$-packed graphs and $\lambda$-low
density graphs.
Informally, an embedded graph is $c$-packed if the total length of edges
contained in any ball of radius $r$ is at most $cr$.
Likewise, a graph has $\lambda$-low density if every ball of radius $r$ intersects
at most $\lambda$ edges of length at~least~$r$.
Thus, a small parameter values implies a certain notion of sparsity in both
definitions.
Practical algorithms for a variety of problems have been proposed under such
sparsity assumptions.
For $c$-packed graphs, this includes the computation of a WSPD of size
$\mathcal{O}(c^3 n)$ in time $\mathcal{O}(c^3 n \log n)$, a balanced separator
of size $\mathcal{O}(c)$ in time $\mathcal{O}(c^2n)$, and an exact distance
oracle with a preprocessing time in $\mathcal{O}(c^2n \log^2 n)$ and a query
time in $\mathcal{O}(c \log n)$ \cite{deryckere2025wspd}.
For $\lambda$-low-density graphs, a WSPD can be constructed in time
$\mathcal{O}(\lambda^2 n \log n)$ \cite{gudmundsson2026well}, and a balanced
separator of size $\mathcal{O}(\lambda \sqrt n )$ can be computed in expected
time $\mathcal{O}(\lambda n)$ \cite{le2024greedy}.

To assess the usefulness of these bounds, the question arises what the values of  $c$ and $\lambda$ are in large real-world networks. 
In this paper, we engineer algorithms to extract $c$ and $\lambda$, enabling us to compute their (approximate) values for road networks with millions of edges for the first time. Furthermore, we investigate the practical applicability of  algorithms parameterized by $c$ or $\lambda$. We identify several theoretical bottlenecks and propose improved and novel algorithms to overcome them. In particular, we present the first exact distance oracle with a practical preprocessing time and  $\mathcal{O}(c)$ query time. Unlike previous parameterized distance oracles, our implementation adheres to the theoretical guarantees while also scaling to large networks and exhibiting excellent query times in practice.

\subsection{Related Work}
The notion of $c$-packedness was originally proposed to study polygonal curves \cite{driemel2010approximating}. It was shown in the same paper that computing the  Fréchet distance  between $c$-packed curves only takes 
near-linear time. In \cite{gudmundsson2015fast}, the concept was transferred to graphs and studied in the context of map matching. A plethora of other results
for $c$-packed curves or graphs have been established in the meantime, see e.g. \cite{bringmann2017improved,gudmundsson2024map,deryckere2025wspd,conradi2026computing,gudmundsson2023computing}.
For the computation of $c$, an exact algorithm with a running time in $\mathcal{O}(m^3)$ is known, where $m$ is the number of edges,  if axis-aligned squares are used in the definition instead of balls \cite{gudmundsson2013algorithms}. Furthermore, approximation algorithms have been investigated, including a $2$-approximation with a running time in $\mathcal{O}(m^2 \log m)$ \cite{gudmundsson2023approximatingc}, a $(6+\varepsilon)$-approximation with a running time in $\tilde{\mathcal{O}}((m/\varepsilon^3)^{4/3})$ \cite{narasimhan2007geometric}, and a recent  $42$-approximation with an expected running time in $\mathcal{O}(m\log^2 m)$ and high success probability \cite{harpeled2025how}. In \cite{aghamolaei2023sampling}, the authors propose the first exact algorithm for $c$-packedness defined on balls with a running time in $\mathcal{O}(m^5)$. Furthermore, they also improve the running time of the $2$-approximation to $\mathcal{O}(m^2)$ and the guarantee of the $(6+\varepsilon)$-approximation to $(4+\varepsilon)$. To the best of our knowledge, only \cite{gudmundsson2023approximatingc} also reports empirical results for  $c$, in particular for the $2$-approximation. They evaluate their implementation on trajectory sets with small to moderate size. Oftentimes small $c$ values are reported. However no general graphs were considered and no running times were provided.

The study of $\lambda$-low-density was initiated in
\cite{vanderstappen1993complexity} in the context of motion planning amid
obstacles. Later, it was shown to also be useful for map matching
\cite{chen2011approximate} as well as the design of approximation algorithms
\cite{harpeled2017approximation} and data structures \cite{gudmundsson2026well}.
An exact algorithm to compute $\lambda$ with a running time in $\mathcal{O}(n \log^3 n + \lambda n \log^2 n + \lambda^2n)$ was proposed in \cite{deberg2002realistic} and a $3$-approximation with an improved running time of  $\mathcal{O}(n \log n + \lambda n)$ in \cite{gudmundsson2023approximatinglambda}. 
The exact algorithm was tested in~\cite{deberg2002realistic} on triangulations
with up to 5000 points, and in~\cite{chen2011approximate} on small city road
networks. In~\cite{gudmundsson2023approximatinglambda} a $4$-approximation for
$\lambda$ (a slight simplification of the $3$-approximation) was implemented and
evaluated on a trajectory set with the largest containing roughly $n=\num{65000}$
points. But again no running times were reported in any of these papers and thus
scalability to  general graphs, especially with million of edges, remained
unclear.

\subsection{Contribution}
We present  novel theoretical and practical results for algorithms parameterized by $c$ or $\lambda$. 
\begin{itemize}
    \item \textit{Parameter relationship.} We prove that $c \in \mathcal{O}(\lambda \sqrt n)$ and that this bound is tight. The upper bound allows to transform algorithms parameterized by $c$ into algorithms parameterized by $\lambda$ and complements the known result $ \lambda \in \mathcal{O}(c)$ \cite{driemel2010approximating}. 
    \item  \textit{Parameter computation.} We present the first implementation for the exact computation of $\lambda$ that scales to graphs with millions of edges as well efficient approximations for $c$.  Our experiments show that both parameters remain moderate even on very large road networks, with  $\lambda$ being  consistently much smaller than $c$. 
    \item \textit{Balanced separators.}
    It was shown in \cite{deryckere2025wspd} that $c$-packed graphs admit
    balanced separators of size $\mathcal{O}(c)$ and that such separators can be
    computed efficiently. However, we observe that the hidden constant in the
    size bound  is prohibitively large in practice. We therefore develop
    techniques that substantially reduce the separator size without increasing
    the asymptotic running time. A key ingredient is an improved PTAS for the
    smallest enclosing square problem, which may be of independent interest. We
    also present an engineered separator algorithm and show that it consistently
    computes smaller balanced separators on road networks than KaHIP, a
    state-of-the-art graph partitioning framework \cite{sanders2013think}.
    \item \textit{Tree decomposition.}
    We show that a tree decomposition of width $\mathcal{O}(c)$ can be computed
    in time~$\mathcal{O}(c^2 n \log n)$ time. While it was previously known that the
    treewidth of $c$-packed graphs is bounded by $\mathcal{O}(c)$, existing
    algorithms for constructing such a decomposition were assumed to require
    exponential time \cite{deryckere2025wspd}. Our efficient construction
    therefore enables, for the first time, the practical use of treewidth-based
    algorithms on $c$-packed graphs after only polynomial-time preprocessing. As
    one application, combining our algorithm with the exact distance oracle of
    \cite{conrado2024faster} yields a construction time $\mathcal{O}(c^3 n \log
    n)$ and distance query time $\mathcal{O}(c)$. This improves the previous
    best query time of $\mathcal{O}(c \log n)$ for $c$-packed graphs. We also
    provide a careful implementation of our tree decomposition algorithm and
    show that it produces small bags on large real-world road networks, enabling
    the construction of highly efficient exact distance oracles in practice.
\end{itemize}


\section{Parameter Definitions and Relationships}
In the following, we consider simple undirected graphs embedded in the plane. Each vertex is associated with a point in the plane, and every edge is assigned its Euclidean length.

The definitions of the parameters considered in this paper appear in the literature using either balls or axis-aligned squares. Since every square can be covered by a constant number of balls of the same radius and every ball can be covered by a constant number of axis-aligned squares of the same radius,
the two variants are equivalent up to a constant factor. Throughout the paper, the radius of a square is defined as  half its side length.
\begin{definition}[$\lambda$-low density]
\label{def:low-density}
A graph $G$ has $\lambda$-low density
if for each square (ball) $S$ of radius $r$,
there are at most $\lambda$ edges of $G$
that have length greater than $r$ and intersect $S$.
\end{definition}
\begin{definition}[$c$-packedness]
\label{def:packedness}
A graph $G$ is $c$-packed
if for each square (ball) $S$ of radius $r$,
the total length of the edges of $G$ clipped to $S$ is at most $cr$.
\end{definition}

Unless otherwise specified,
we call the number of vertices of the graph $n$ and its number of edges $m$.
Note that both $\lambda$ and~$c$ are upper bounds for the maximum degree of a
graph with $\lambda$-low density and packedness~$c$,
so $m\in\O(n)$ if $\lambda$ or $c$ is constant.
It was already known that each graph has $\lambda\le2c$
\cite{driemel2010approximating},
but for the other direction we give the following new result.
\begin{lemma}
\label{lem:relationship}
Let $G$ be a graph with $\lambda$-low density,
then $G$ has packedness $c\in\O\bigl(\lambda\sqrt n\bigr)$.
\end{lemma}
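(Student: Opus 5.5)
The plan is to fix an arbitrary square $S$ of radius $r$ and bound the total length of edges of $G$ clipped to $S$ by $\O(\lambda\sqrt n)\,r$. To do this, I would refine $S$ into a grid at a scale chosen so that the low-density condition handles the long edges and the edge count handles the short ones. Set $k=\lceil\sqrt n\,\rceil$ and partition $S$ into $k^2$ congruent axis-parallel cells $Q_1,\dots,Q_{k^2}$, each of radius $r/k$. The clipped length of $G$ in $S$ is at most the sum over all cells of the clipped length in $Q_i$. For each cell I would split the edges meeting it into \emph{long} edges (length $>r/k$) and \emph{short} edges (length $\le r/k$).

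\emph{Long edges.} By Definition~\ref{def:low-density} applied to the square $Q_i$ of radius $r/k$, at most $\lambda$ edges of length greater than $r/k$ intersect $Q_i$. Each edge is a segment, so its clipped part in $Q_i$ has length at most the diameter of $Q_i$, namely $2\sqrt2\,r/k$. The long edges therefore contribute at most $2\sqrt2\,\lambda r/k$ per cell, and at most $k^2\cdot 2\sqrt2\,\lambda r/k = 2\sqrt2\,\lambda k r$ over all cells. This is $\O(\lambda\sqrt n\,r)$.

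\emph{Short edges.} Here I would not charge per cell but globally. The total clipped length of all short edges, summed over all cells, is at most their total length, since the cells are interior-disjoint. That total is at most $m\cdot r/k$. As noted after Definition~\ref{def:packedness}, $\lambda$ bounds the maximum degree: a square around a vertex with radius smaller than its shortest incident edge meets all its incident edges, and all of them are longer than that radius. Hence $m\le \lambda n/2$, and the short edges contribute at most $\lambda n r/(2k)\le \lambda\sqrt n\, r/2$.

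Adding the two contributions gives a clipped length of at most $(2\sqrt2+\tfrac12)\lambda\sqrt n\, r$ for every $S$. This yields $c\in\O(\lambda\sqrt n)$ for the square version, and the ball version follows by the constant-factor equivalence stated before the definitions. The main point requiring care is the choice of scale $r/k$. It has to balance the $\lambda k$ term from long edges against the $\lambda n/k$ term from short edges, and $k=\sqrt n$ is exactly the balance point. A secondary detail is making sure cells on shared boundaries do not double-count short edges in a harmful way; this only affects measure-zero pieces, or can be avoided by using half-open cells.
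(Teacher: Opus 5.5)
Your proof is correct, and it reaches the bound by a different charging scheme than the paper's. The paper scales $S$ to radius $1$ and sorts edges into dyadic classes by the length of their intersection with $S$. For each level $i\in\{0,\dots,k\}$ it covers $S$ by $4^i$ squares of radius $2^{-i}$. This gives at most $4^i\lambda$ edges with clipped length between $2^{-i}$ and $2^{-i+1}$, each contributing at most $2^{-i+1}$. Edges with clipped length at most $2^{-k}$ are handled by the edge-count bound. The geometric sum $\sum_i 4^i\lambda 2^{-i+1}\in\O(\lambda 2^k)$ is then balanced against the tail $\lambda n2^{-k}$ by taking $2^k\approx\sqrt n$.

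You work at the single scale $r/\sqrt n$ and charge long edges per cell, each cell receiving at most $\lambda$ pieces of length at most its diameter, rather than charging each edge's whole clipped length to one class. Since the paper's geometric sum is dominated by its finest level, collapsing to that level loses nothing. Your argument is shorter, avoids the dyadic bookkeeping, and gives a better constant: asymptotically $(2\sqrt2+\tfrac12)\lambda\sqrt n$ using $m\le\lambda n/2$, versus up to roughly $12\lambda\sqrt n$ from the paper's $4\lambda(1+2^k+n2^{-k})$. The paper's per-edge classification buys nothing extra for this lemma.

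Two cosmetic points:
\begin{itemize}
\item With $k=\lceil\sqrt n\rceil$ the long-edge term is $2\sqrt2\lambda\lceil\sqrt n\rceil r\le 4\sqrt2\lambda\sqrt n\,r$. Your stated constant is therefore slightly optimistic for small $n$, which does not affect the $\O$-bound.
\item You can drop the boundary double-counting worry. Split $\|G\cap S\|$ into long and short edges first, apply the cell decomposition only to the long edges (where covering, not partitioning, suffices), and bound the short ones directly by their total length.
\end{itemize}
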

\begin{proof}
Let $S$ be an arbitrary square of radius $r$.
W.l.o.g.\ assume $r=1$; else scale $G$ and~$S$ by~$1/r$.
We want to show that the total length of edges inside $S$
is $\O\bigl(\lambda\sqrt n\bigr)$.
There are at most $\lambda$ edges
whose intersection with $S$ has length greater than $2$;
this intersection has length at most $2\sqrt2$.
Let $k$ be some natural number to be determined shortly.
For each $i\in\{0,\ldots,k\}$ and
for each of the $4^i$ squares of radius $2^{-i}$ that cover $S$,
there can be at most $\lambda$ edges
whose intersection with $S$ has length between $2^{-i}$ and $2^{-i+1}$.
Generally there can be at most $\lambda n$ edges in $G$
since it has maximum degree $\lambda$.
In particular there can be at most $\lambda n$ edges
of length at most~$2^{-k}$ intersecting~$S$.
This covers all intersections of edges with~$S$.
Their total length is
\[\|G\cap S\|
\le\lambda2\sqrt2+\sum_{i=0}^k4^i\lambda2^{-i+1}+\lambda n2^{-k}
\le4\lambda\bigl(1+2^k+n2^{-k}\bigr).\]
With $k:=\bigl\lfloor\log\sqrt n\bigr\rfloor$ the claim follows.
\end{proof}
This bound is tight:
A $\bigl(\sqrt n\times\sqrt n\bigr)$-grid graph
has constant density but packedness~$\Theta(\sqrt n)$.

\section{Balanced Separators}%
\label{sec:bal-sep}
Balanced separator computation is a core primitive in many algorithms. 
An $\alpha$-balanced separator in $G(V,E)$ for $\alpha \in (0,1)$ is a set of
nodes $S \subset V$ such that all connected components in $G[V\setminus S]$
contain at most $\alpha n$ nodes.
Computing the smallest $\alpha$-balanced separator for given $\alpha$ takes
exponential time in general graphs, but in $c$-packed and $\lambda$-low-density graphs efficient algorithms with parameterized separator size bounds are known. However,  the hidden constants in the respective separator sizes are too large to
be useful in practice.
We thus device modified   separator constructions with
significantly improved separator sizes.

\subsection{From Ball Separators to Square Separators in Geometric Graphs}
\label{sec:balltosquare}
In \cite{le2024greedy} a randomized algorithm for computing balanced separators
in $\lambda$-low-density graphs\footnote{The paper actually considers
$\tau$-lanky graphs and proves a separator size in $\O(\tau \sqrt n)$.
However by definition, $\tau \leq \lambda$ and thus the result directly
transfers to low-density graphs.} was proposed which guarantees a separator size
in $\O(\lambda \sqrt n)$.
They prove that there always exists a ball $B_r$ of radius $r$ that encloses a
constant fraction of all nodes but not more~than~$\frac n2$, such that the
number of edges of length at most $r$ intersecting the ball is in
$\O(\lambda \sqrt n)$.
As there are at most $\lambda$ edges of length greater~than~$r$
that intersect the ball by
definition, the respective separator size follows.
They argue that a ball with these properties can be computed in the Euclidean
plane with an expected running time in $\O(\lambda \sqrt n)$.
For $c$-packed graphs a deterministic algorithm was proposed in
\cite{deryckere2025wspd} that computes a balanced separator of size
$\O(c)$ in time $\O(c^2 n)$.
The idea is to compute two nested balls $B=B_r$ and $B'=B_{2r}$, with $B$
enclosing at least a constant fraction of the nodes, and at least half of the
nodes being outside of $B'$.
Then running a max-flow algorithm from border nodes inside $B$ to nodes outside
$B'$ and using the max flow--min cut duality, a valid balanced separator is
identified. See \cref{fig:orig-sep} for an illustration of this algorithm.
\begin{figure}[ht]
        \begin{subfigure}[t]{0.32\textwidth}
        \centering
        \includegraphics[page=2,width=\textwidth]{graphics/schematics.pdf}
        \caption{The first step consists in finding two nested balls ${\color{kn_peach}B}$ and ${\color{kn_bordeaux65}B'}$ enclosing a particular amount of vertices.}
        \label{fig:orig-sep-step-balls}
    \end{subfigure}\hfill
    \begin{subfigure}[t]{0.32\textwidth}
        \centering
        \includegraphics[page=5,width=\textwidth]{graphics/schematics.pdf}
        \caption{The second step is to solve a max-flow instance where sources are nodes inside ${\color{kn_peach}B}$, here connected to the supersource ${\color{red}\blackdiamond}$, and sinks are nodes outside ${\color{kn_bordeaux65}B'}$, again connected to the supersink ${\color{blue}\blacksquare}$. Edge capacity is set to 1.
        From this, the min-cut $({\color{red}S},{\color{blue}T})$ is extracted.}
        \label{fig:orig-sep-step-flow}
    \end{subfigure}\hfill
    \begin{subfigure}[t]{0.32\textwidth}
        \centering
        \includegraphics[page=7,width=\textwidth]{graphics/schematics.pdf}
        \caption{Finally, {\color{teal}cut edges} can be identified, and the {\color{lime!60!black!90}separator} is defined as a vertex cover of the latter.}
        \label{fig:orig-sep-step-sep}
    \end{subfigure}
    \caption{Illustration of the original separation method from \cite{deryckere2025wspd}.}
    \label{fig:orig-sep}
\end{figure}
They argue that for each separator node there needs to be a path of length at
least $2r-r=r$ from inside $B$ to outside $B'$ and the set of such paths needs
to be edge disjoint.
As the total length of edges inside $B'$ is bounded by $2cr$ by definition of
packedness, an upper bound of~$2c$ on the separator size follows.
Given our newly established relationship between $c$ and $\lambda$ in
\cref{lem:relationship}, we get the following corollary.
\begin{corollary}
    Each separator of size $\O(c)$ has size
    $\O\bigl(\lambda\sqrt n\bigr)$.
\end{corollary}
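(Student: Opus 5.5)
The corollary follows by substituting \cref{lem:relationship} into the size bound of the separator construction from \cite{deryckere2025wspd}. Concretely, I read ``a separator of size $\O(c)$'' as a separator whose size is at most $\alpha c$ for some absolute constant $\alpha$. The flow-based argument gives $\alpha = 2$: it yields at most $2c$ separator nodes, because each separator node accounts for an edge-disjoint path of length at least $r$ inside $B'$, and $B'$ contains total edge length at most $2cr$.

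The proof then has two steps.

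\textbf{Step 1: bound $c$ by $\lambda$.} Let $G$ be a graph with $\lambda$-low density. By \cref{lem:relationship}, $G$ is $c'$-packed for some $c' \le \beta\lambda\sqrt n$, where $\beta$ is an absolute constant; the proof of the lemma gives roughly $\beta = 4\cdot 3$ after bounding $1+2^k+n2^{-k}$ by $3\sqrt n$ when $k=\lfloor\log\sqrt n\rfloor$. Packedness is monotone: any smaller value $c$ for which $G$ is $c$-packed also satisfies $c\le c'$. So in either case $c \le \beta\lambda\sqrt n$.

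\textbf{Step 2: substitute.} A separator of size at most $\alpha c$ then has size at most $\alpha\beta\lambda\sqrt n \in \O(\lambda\sqrt n)$.

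The one point that needs care is the implicit quantification. The $\O(c)$ guarantee must be a bound depending only on the packedness parameter of the graph at hand, with a constant independent of $G$. Otherwise the substitution would not yield a uniform constant. Since both hidden constants $\alpha$ and $\beta$ are absolute, this holds, and nothing beyond composing the two inequalities is needed.

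The resulting bound matches, up to constants, the $\O(\lambda\sqrt n)$ separator of \cite{le2024greedy}. This is expected: the grid example after \cref{lem:relationship} shows the $\sqrt n$ loss in converting $\lambda$ to $c$ cannot be avoided in general.
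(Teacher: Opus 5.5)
Your proposal is correct and is the paper's argument: the corollary comes from plugging $c\in\O(\lambda\sqrt n)$ from \cref{lem:relationship} into the $2c$ bound of the flow-based separator construction. Your explicit constants also check out, since $1+2^k+n2^{-k}\le 3\sqrt n$ for $k=\lfloor\log\sqrt n\rfloor$ and all $n\ge 1$, which gives $\beta=12$. Your point about quantification is the only thing the paper leaves implicit: the $\O(c)$ bound holds for any valid packedness value, in particular the one the lemma supplies.
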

This directly implies that the algorithm for $c$-packed graphs works for
$\lambda$-low-density graphs with a  size bound matching the one shown
in \cite{le2024greedy}, but without relying on randomization.

\begin{table}[t]
    \centering
    \begin{tabular}{|l|rc|r|}
    \hline
    method & $\alpha$ & $|S_\alpha|$ & $|S_{2/3}|$\\
    \hline
    nested balls & $1-\frac1{2 \delta^3} \approx 0.998$ & $2c$ & $406c$ \\
    + better balance (bb) & $1-\frac1{\delta^3+1} \approx 0.997$ & $2c$ & $270c$ \\
    + approximation (apx) & $1-\frac1{\delta+1} = 0.875$ & $\frac2{1-\varepsilon}c$ & $9c$\\
    nested squares + bb + apx & $1-\frac15 = 0.800$ & $\frac{2\sqrt2}{1-\varepsilon}c$ & $6c$\\
    \hline
    \end{tabular}
    \caption{Comparison of $\alpha$-balanced separator computation methods for
    $c$-packed graphs.
    The specified $\alpha$ is the provided balance guarantee and $|S_\alpha|$
    denotes the respective separator size.
    The column $|S_{2/3}|$ shows the separator size resulting from iterating the
    method until each component contains at most $\frac23n$ nodes
    (using $\varepsilon = 0.05$ for the bottom two rows).}
    \label{tab:sepsize}
\end{table}
In principle the separator computation algorithm for $c$-packed graphs as
presented in~\cite{deryckere2025wspd} is practical.
However there is one caveat:
The balance factor that is guaranteed by the approach is
$\left(1-\frac1{2\delta^3}\right)$ where $\delta$ denotes the doubling
constant of $\mathbb{R}^d$.
For $d=2$ we have $\delta = 7$ and therefore the achieved balance factor is
greater~than~$0.998$.
To obtain a smaller balance factor $\alpha$, this algorithm can be applied
iteratively to the largest connected component of the graph after separator
removal.
If we use $k$ iterations, the resulting balance factor guarantee is $\alpha
= \left(1-\frac1{2\delta^3}\right)^k$ with a running time in
$\O(c^2nk)$.
However the resulting separator consists in the worst-case of the union of all
$k$ computed separators, increasing the separator size from $2c$ to $2kc$.
Thus to achieve $\alpha =\frac12$, we have to set $k \approx 475$ and
therefore get a separator size bound of approximately $950c$.
With such a large coefficient, the practical utility is void.
In the following we will discuss
how to decrease the coefficient significantly by
(i)~better balancing the number of nodes inside $B$ and outside $B'$,
(ii)~leveraging $(1+\varepsilon)$-approximation algorithms for smallest
enclosing ball,
(iii)~switching from balls to squares.
\Cref{tab:sepsize} summarizes the (accumulated) impact of these
improvements for the construction of a $\frac23$-balanced separator.

\bigskip\noindent\textit{Better Balance.}
The balance guarantee of $\alpha = \left(1-\frac1{2\delta^3}\right)$ shown in
\cite{deryckere2025wspd} relies on the following approach: $B$ is supposed to
enclose $k=\frac n{2\delta^3}$ nodes.
However computing the smallest such ball with radius~$r_{opt}$ takes time in
$\O(nk)$ \cite{harpeled2005fast} and thus for the chosen $k$ this results in
a quadratic running time.
Therefore a linear time algorithm is applied instead, which returns a ball~$B$
with radius~$r \leq 2 r_{opt}$ \cite{harpeled2005fast}.
Accordingly each ball of radius $\frac r2$ contains strictly less than $k$
nodes.
Then $B'$ is constructed from $B$ by doubling its radius, that is, $r'=2r \leq
4 r_{opt}$.
Consequently $B'$ can be covered with at most $\delta^3$ balls of radius
$\frac r2$.
As a result $B'$ contains less than $k\delta^3 = \frac n2$ nodes.
But the worst balance factor is anyway obtained in case only the $k$ nodes in
$B$ are separated from the rest.
Thus it suffices to ensure that there are at least as many nodes outside of
$B'$ as inside of $B$, that is, $n-\delta^3k = k$ which yields
$k=\frac n{\delta^3+1}$. The respective improved balance factor shown in
the second row of \cref{tab:sepsize}.

\bigskip\noindent\textit{Smallest Enclosing Ball Approximation.}
To further improve $\alpha$, we leverage the fact that in~\cite{harpeled2005fast} not
only a $2$-approximation for smallest enclosing ball of $k$ points
is presented but also a $(1+\varepsilon)$-approximation which runs in time
$\O(n + \frac1{\varepsilon^3} \log^2 \frac1\varepsilon)$ for
$\varepsilon > 0$ and $k \in \Omega(n)$.
We use this algorithm to compute a ball $B$ with radius $r \leq (1+\varepsilon')
r_{opt}$ for some $\varepsilon' > 0$.
We then choose $\varepsilon \in (\varepsilon',1)$.
Based thereupon, we know that $r/(1+\varepsilon) < r_{opt}$.
We set the radius of $B'$ to $r'=2 r/(1+\varepsilon)$.
By the ball covering property, it directly follows that $B'$ contains at most
$\delta k$ nodes.
To balance the number of nodes inside $B$ and outside $B'$, we solve $n-\delta k
= k$ for~$k$ and get $\frac n{\delta+1}$, which results in a significant
improvement in balance factor; see the third row of \cref{tab:sepsize}.
However these modifications also affect the separator size.
We know that each path from inside $B$ to outside of $B'$ has a length of at
least $r'-r$ and the total length of all edge disjoint paths is upper bounded by
$cr'$.
With $r'=2 r/(1+\varepsilon)$ we get $\frac{cr'}{r'-r}=
\frac{2c}{1-\varepsilon}$.

\bigskip\noindent
\textit{Switching to Squares.}
To cover a ball in the plane, $\delta=7$ balls of half the radius are needed.
However for a square with side length $r$, we only need $4$ squares of side
length $\frac r2$ to cover it.
Thus if we can compute a $(1+\varepsilon)$-approximation for the smallest
enclosing square of $k$ points, the analysis from above can be transferred with
two differences:
If the definition of $c$ relies on balls, then the segment lengths inside a
square of side length $r$ is bounded by $\sqrt2cr$ instead of $cr$ and thus the
separator size bound increases proportionally.
The second difference is that we can now set $k=\frac n5$ as $B'$ will
contain at most $4k$ nodes and thus the number of nodes outside of $B'$ will
also be at least $\frac n5$.
This further improves the balance factor; see the fourth row in
\cref{tab:sepsize}.
In \cite{chan2021smallest}, the smallest enclosing rectangle problem was
considered and a $(1+\varepsilon)$-approximation for the rectangle area was
described which runs in $\O(n \log n \frac1{\varepsilon^3} \log
\frac1\varepsilon)$.
By restricting the algorithm to consider squares, the area approximation
translates into an approximation of side length, which is exactly what we need.
However, by combining ideas from \cite{harpeled2005fast} and~\cite{chan2021smallest},
we can design an algorithm directly tailored to side length approximation for
smallest enclosing square that runs in $\O(n + \frac1{\varepsilon^2}
\log \frac1\varepsilon)$ for $\varepsilon > 0$ and $k \in \Omega(n)$ and is
thus faster than the algorithms for balls and rectangles.
The detailed algorithm description is provided in \cref{apx:enclosing-square}.
Given that for any fixed $\varepsilon > 0$ the computation of the required
square is in~$\O(n)$, the running time of the separator computation is
still dominated by the max-flow component.
Therefore we match the $\O(c^2 n)$ running time of the nested ball
approach.
But now, we can construct an $\alpha=\frac23$-balanced separator with a
practically useful size guarantee of $6c$ when using $\varepsilon = 0.05$, which
is over a factor of 60 smaller than the original bound.
This factor further increases for smaller choices of $\alpha$.

\section{Separator-Based Exact Distance Oracles}%
\label{sec:edo}
Given an algorithm that computes balanced separators, an exact distance oracle
(EDO) can be constructed by computing a balanced hierarchical
separator tree $\mathcal{S}$ and storing with each node per separator its
shortest path distances to all nodes in the respective subgraph.
See \cref{fig:sep-tree} for an illustration of $\mathcal{S}$.
We refer to this data structure as $\mathcal{S}$-EDO.
In an $a$-$b$-query, first the separator bags in $\mathcal{S}$ containing $a$
and $b$ are identified, respectively.
Then for nodes $w$ in the union of all bags
from their lowest-common-ancestor bag
to the root of $\mathcal{S}$,
distances $d(w,a)$ and $d(w,b)$ are looked up and the smallest
sum of such distances is tracked.
As balanced separators are used for the construction of $\mathcal{S}$, its depth
is logarithmic.
Thus if every separator in $\mathcal{S}$ has a size in~$\O(c)$, the
space consumption is in $\O(cn \log n)$ and the query time is in
$\O(c \log n)$.
The data structure can be constructed in $\O(c^2 n \log^2 n)$ time \cite{deryckere2025wspd}.

When given a tree decomposition $\mathcal{T}$ with bag size \tw, though, an
alternative distance oracle ($\mathcal{T}$-EDO) was described in
\cite{conrado2024faster}.
The construction time is in $\O(\tw^3 n \log n)$, the space consumption
in $\O(\tw \cdot n \log n)$, and the query time in $\O(\tw)$.
It is known that $c$-packed graphs have treewidth $\tw \in \O(c)$
\cite{dvorak2019treewidth,deryckere2025wspd}.
However constructing a tree decomposition with maximum bag size \tw\ takes
$2^{\O(\tw^2)}n^{\O(1)}$ time \cite{korhonen2023improved},
and no constant factor approximation algorithm for general graphs exists
unless $\mathsf P=\mathsf{NP}$~\cite{wu2014inapproximability}.
In this section we show how to compute $\mathcal{T}$ for $c$-packed graphs of
width $\O(c)$ in polynomial time.
This makes a rich set of algorithms based on tree decompositions directly
applicable to $c$-packed graphs, including the improved distance oracle
described in \cite{conrado2024faster}.

\subsection{Tree Decompositions from Separators}
A simple way to obtain a valid tree decomposition $\mathcal{T}$ using a balanced
separator oracle is as follows: First construct a balanced hierarchical
separator tree $\mathcal{S}$, subsequently include in each separator bag $X$ all
bags on the path from $X$ to the root of $\mathcal{S}$.
This ensures that each edge is covered by at least one bag and that the bags
containing a specific node form a connected subtree.
However, if $B$ is an upper bound on the separator size occurring in
$\mathcal{S}$, the resulting width of the tree decomposition is in
$\O(B\log n)$.

Reed's algorithm \cite{reed1992finding} for tree decomposition construction in
graphs with bounded separator size avoids this logarithmic blow-up
The algorithm takes as input a graph $G(V,E)$ with an assumed upper bound $B$ on
its treewidth \tw, as well as a special node subset $W$ of size at most $3B$.
It then either certifies that $\tw>B$ or returns a tree decomposition with
width at most $4B$.
When calling~$\reed(G(V,E),W)$, the algorithm first computes a so-called balanced
$W$-separator~$S$ which must ensure for each connected component $C$
in~$G[V\setminus S]$ we have $|C| \leq \frac34|V|$ and $|C \cap W| \leq
\frac23|W|$.
Thus $S$ must separate both the whole graph and the subset~$W$ in a balanced
manner at the same time.
Reed shows that such a balanced $W$-separator of size at most $B$ can be
computed in time $\O(9^B B |E|)$ if it exists.
Once a separator of suitable size is identified, the algorithm creates a tree
decomposition bag $S \cup W$ and recursively calls itself on each connected
component $C$ of $G[V\setminus S]$ using~$\reed(G[C \cup S], (C\cap W) \cup S)$.
See \cref{fig:reed} for an illustration of this process.
\newsavebox{\leftbox}%
\newsavebox{\rightbox}%
\savebox{\leftbox}{\includegraphics[page=17,width=0.54\textwidth]{graphics/schematics.pdf}}%
\savebox{\rightbox}{\includegraphics[page=18,width=0.44\textwidth]{graphics/schematics.pdf}}%
\begin{figure}[ht]
    \begin{subfigure}[t]{0.54\textwidth}
        \centering
        \raisebox{0.5\dimexpr\ht\rightbox - \ht\leftbox\relax}{\usebox{\leftbox}}
        \caption{A hierarchical separator tree, for $\mathcal{S}$-EDO. Separators are {\color{kn_bordeaux65}Bordeaux stripes}.}
        \label{fig:sep-tree}
    \end{subfigure}\hfill
    \begin{subfigure}[t]{0.44\textwidth}
        \centering
        \usebox{\rightbox}
        \caption{A partial $\reed$ call tree up to depth 2. ${\color{kn_peach}W}$-separators are materialized by {\color{kn_bordeaux65}Bordeaux stripes}.}
        \label{fig:reed}
    \end{subfigure}
    \caption{Illustration of the two main constructs leading to Exact Distance Oracles.}
    \label{fig:s-t-structs}
\end{figure}
If the number of nodes in $G[C \cup S]$ drops below $4B$, no further refinement
is necessary and instead a tree decomposition bag containing $C \cup S$ is
returned.
Tree decompositions resulting from recursive calls on the components $C$ are
then connected to the parent bag $S \cup W$.
As we always have $|S| \leq B$ and $W \leq 3B$, the maximum bag size is $4B$.
The size bound of $W$ is maintained throughout the recursive calls as for each
component $C$ we have $|C \cap W| \leq \frac23|W|$ by definition of the
balanced $W$-separator and thus $|(C \cap W) \cup S| \leq \frac23 3B + B
\leq 3B$.
The size of the graph in a recursive call is bounded by~$|C \cup S| \leq
\frac34|V| +B $.
Therefore the graph size drops below $4B$ after a logarithmic number of calls.
Thus Reed's algorithm constructs a tree decomposition with a width linear in
$B$.
However the balanced $W$-separator computation---with exponential dependency
on~$B$---only runs in polynomial time on graphs with $\tw\le B\in\O(1)$.

\subsection{Application to \texorpdfstring{$c$}{c}-Packed Graphs}%
\label{subsec:tree_dec_cpacked}
The naive tree decomposition construction with width in $\O(B\log n)$
is directly applicable to $c$-packed graphs using $B \in \O(c)$.
The construction time for $\mathcal{S}$ is in $\O(c^2 n \log n)$, and
the $\mathcal{T}$~construction based thereupon runs in time $\O(cn \log
n)$.
However, with a width of~$\O(c \log n)$, the query time of
$\mathcal{T}$-EDO is the same as for $\mathcal{S}$-EDO.

To get $\mathcal{T}$ with width $\O(c)$, we now adapt Reed's algorithm
to $c$-packed graphs.
The crucial step is the efficient computation of a balanced $W$-separator $S$
for a given graph~$G(V,E)$ and a node subset~$W$, such that each component in
$G[V\setminus S]$ contains at most a constant fraction of the nodes in $V$ and
the nodes in $W$.
To achieve this, we actually compute two separators $S_V$ and $S_W$ and set $S:=
S_V \cup S_W$.
Here $S_V$ is simply a $\frac45$-balanced square separator of~$G$ of size
at most $3c$ constructed as described in \cref{sec:balltosquare} with
$\varepsilon=0.05$.
For~$S_W$ we require that it separate both $G$ and $W$ simultaneously, but
only for $W$ the balance criterion needs to be enforced.
This can be achieved by constructing the two nested squares for separator
computation solely based on the nodes in $W$, but then considering all nodes in
$V$ for the actual flow and separator computation.
See \cref{fig:w-sep} for an illustration.
\begin{figure}[ht]
    \centering
    \includegraphics[page=19,width=.25\linewidth]{graphics/schematics.pdf}
    \caption{Illustration of the balanced $W$-separation process using $c$-Packed separation.}
    \label{fig:w-sep}
\end{figure}
As the separator bound of $3c$ is oblivious to the number of nodes enclosed in
the respective squares but only depends on their side length ratio, we get the
same maximum separator size for $S_W$ as for $S_V$.
Accordingly we have $|S| \leq |S_V| + |S_W| \leq 6c$.
It remains to adapt the bound on the size of $W$ to make the recursion work.
If $|W| \leq 30c$, we know that $|(C \cap W) \cup S| \leq \frac45 |W| + 6c
\leq 30c$.
Furthermore $|C \cup S| \leq \frac45 |V| + 6c$ holds, and therefore, after a
logarithmic number of steps, a subgraph size of at most $30c$ is reached.
We remark that we do not require explicit knowledge of the value of $c$ to stop
the recursion.
Instead we can keep track of the largest bag size we have seen so far and once
the current subgraph size drops below this threshold, we can safely abort.
As the recursion depth is logarithmic and the separator computations at each
recursion level take $\O(c^2 n)$ time in total, we get the following
theorem.
\begin{theorem}
    Given a $c$-packed graph with $n$ nodes, a tree decomposition of width $30c$
    can be computed in time $\O(c^2 n \log n)$.
\end{theorem}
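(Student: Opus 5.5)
The plan is to make the Reed-style recursion of \cref{subsec:tree_dec_cpacked} rigorous. The proof has three parts: correctness of the decomposition, a width bound of $30c$, and a running-time bound. Throughout we use that $c$-packedness is hereditary. Every subgraph $G[C\cup S]$ keeps its embedding, and clipped edge lengths can only shrink, so every recursive call again runs on a $c$-packed graph. Hence the square-separator guarantee of \cref{sec:balltosquare} applies at every call: a $\frac45$-balanced separator of size at most $\frac{2\sqrt2}{1-\varepsilon}c\le 3c$ for $\varepsilon=0.05$.

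\emph{Balanced $W$-separator.} First I will prove that $S=S_V\cup S_W$ has the stated properties. $S_V$ is the $\frac45$-balanced square separator of the current graph. $S_W$ is obtained by choosing the two nested squares from the points of $W$ only, with $k=|W|/5$, and then running the flow on all vertices.

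The flow argument bounds $|S_W|$ using only two facts: every edge-disjoint path from the inner square to the outside of the outer square has length at least $r'-r$, and the clipped length inside the outer square is at most $\sqrt2\,c\,r'$. Neither fact depends on which points determined the squares, so $|S_W|\le 3c$ as well.

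Balance follows from the min-cut structure. The source side contains the $W$-points in the inner square, and the sink side contains the at least $|W|/5$ points of $W$ outside the outer square. Every component of $G[V\setminus S_W]$ therefore misses at least $|W|/5$ of $W$, giving $|C\cap W|\le\frac45|W|$. The analogous argument for $S_V$ gives $|C|\le\frac45|V|$. Removing more vertices keeps both bounds, so $|S|\le 6c$.

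I expect this step to be the main obstacle. One must check that the cut side argument really puts the inner-square points and the outer points in different components. One must also handle degenerate cases, for instance $|W|<5$, or many coincident points so that no useful square exists. In those cases I would simply put $W$ into $S$, which keeps $|S|\le 6c+O(1)$; the small additive slack is absorbed by the invariant below, or handled by taking $|W|\ge 5$.

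\emph{Decomposition and width.} Next, by induction over the recursion, I will show the invariant $|W|\le 24c$ (a slightly sharper bound than $30c$). It holds because $\frac45\cdot24c+6c\le 24c+\ldots$; more precisely $\frac45|W|+6c\le 30c$ whenever $|W|\le 30c$. Bags are $S\cup W$ of size at most $36c$, or leaves of size at most $30c$.

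To reach the claimed $30c$ exactly, I will follow Reed's accounting: bound the bag $S\cup W$ by choosing the stopping threshold and the invariant $|W|\le 24c$. Then $\frac45\cdot 24c+6c\le 25.2c$ fails to close, so I would instead adjust the invariant to $|W|\le 30c-6c$ with a $\frac56$-type analysis. If the constants do not close, I would settle for a width of $O(c)$ with explicit constant and note that this is where the constant is tuned.

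The decomposition properties are the standard Reed ones:
\begin{itemize}
\item Each edge lies inside some $G[C\cup S]$ or in a leaf.
\item The recursive call on $C$ receives $(C\cap W)\cup S$, which contains all of its vertices shared with the parent bag.
\item This makes the bags containing any given vertex form a connected subtree.
\end{itemize}

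\emph{Running time.} The vertex sets $C$ at the same recursion depth are disjoint apart from separator vertices. Graph sizes shrink geometrically: $|C\cup S|\le\frac45|V|+6c$. So the depth is $O(\log n)$, and the total size per level is $O(n)$, since the duplicated separator vertices add at most $O(c)$ per call while the number of calls is $O(n/c)$.

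Each separator computation costs $O(c^2|V|)$: linear time for the enclosing-square approximation from \cref{apx:enclosing-square}, plus a flow with $O(c)$ augmentations on $O(c|V|)$ edges. Summing over the levels gives $O(c^2n\log n)$.
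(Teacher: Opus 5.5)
Your route is the paper's route. Both use Reed's recursion with $S=S_V\cup S_W$. Each part comes from nested $(1+\varepsilon)$-approximate squares; for $S_W$ the squares are chosen from $W$ alone and the flow runs on all of $V$. Each part has size at most $3c$, because the flow bound depends only on the ratio of the two radii. The recursion keeps the invariant $\frac45|W|+6c\le 30c$, has logarithmic depth, and does $O(c^2n)$ work per level. Your remarks that $c$-packedness passes to the subgraphs $G[C\cup S]$, and that enlarging $S$ preserves both balance conditions, are correct. The paper leaves both implicit.

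\emph{The width paragraph does not work.} The sentence asserting the invariant $|W|\le 24c$ is false, as you then notice. The ``$\frac56$-type analysis'' is not an argument. With $|S|\le 6c$ and only $\frac45$-balance on $W$, any invariant $|W|\le X$ must satisfy $\frac45X+6c\le X$, i.e.\ $X\ge 30c$. So this accounting certifies internal bags $S\cup W$ only up to $36c$ vertices, and no retuning of the invariant changes that.

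This is also exactly what the paper's own proof yields. It establishes $|W|\le 30c$ and leaves of size at most $30c$, and then states width $30c$. That drops the $|S|\le 6c$ term, which its own summary of Reed's method counts (``$|S|\le B$ and $|W|\le 3B$, so the maximum bag size is $4B$''). The honest conclusion of both arguments is bag size at most $36c$, i.e.\ width $O(c)$. State that instead of hand-waving.

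If you want the literal $30c$, measure packedness with squares rather than balls. Then each nested-square separator has size at most $\frac{2}{1-\varepsilon}c<2.11c$, so $|S|<4.3c$. Your invariant $|W|\le 24c$ then closes, since $19.2c+4.3c\le 24c$. Internal bags then have fewer than $28.3c$ vertices and leaves at most $30c$.

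\emph{A smaller point, also glossed over in the paper.} You assert ``$O(n)$ total size per level because there are $O(n/c)$ calls'' but do not show it. The duplicated vertices come from the separators of all ancestors, not only the parent, so a naive count grows by a constant factor per level.

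The potential $\sum_x(|V_x|-12c)$ over the non-leaf calls of a level fixes this, because it never increases:
\begin{itemize}
\item With $k\ge 2$ non-leaf children, their contribution is at most $|V|-|S|+k(|S|-12c)\le |V|-12c$.
\item A single child contributes at most $\frac45|V|-6c\le |V|-12c$, since $|V|>30c$.
\end{itemize}
Hence each level has fewer than $n/(18c)$ non-leaf calls and total size below $2n$. This gives the $O(c^2 n\log n)$ bound rigorously.
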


The theorem allows to apply many known algorithms based on tree decompositions
to $c$-packed graphs after a polynomial time preprocessing.
This includes the exact distance oracle $\mathcal{T}$-EDO as described in
\cite{conrado2024faster}.
\begin{corollary}
    Given a $c$-packed graph, an exact distance oracle can be computed in
    $\O(c^3 n \log n)$ time, with a space consumption in $\O(c
    n \log n)$, and a query time in $\O(c)$.
\end{corollary}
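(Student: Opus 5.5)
The plan is to combine the tree decomposition from the preceding theorem with the $\mathcal{T}$-EDO of \cite{conrado2024faster}, treating the latter as a black box. Given a $c$-packed graph $G$ with $n$ nodes, first run the adapted Reed procedure from \cref{subsec:tree_dec_cpacked}. By the theorem just proved, this yields a tree decomposition $\mathcal{T}$ of width at most $30c$ in time $\O(c^2 n \log n)$. Then feed $\mathcal{T}$ into the construction of \cite{conrado2024faster} with $\tw := 30c \in \O(c)$. That oracle needs only a valid tree decomposition of the stated width, not an optimal one, so its guarantees apply: construction time $\O(\tw^3 n \log n) = \O(c^3 n \log n)$, space $\O(\tw \cdot n\log n) = \O(cn\log n)$, and query time $\O(\tw) = \O(c)$. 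The total preprocessing is $\O(c^2 n\log n) + \O(c^3 n\log n) = \O(c^3 n \log n)$.

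The key checks, in order, are the following. First, confirm that the decomposition output by the theorem has the form \cite{conrado2024faster} requires, namely a rooted tree of bags with width $\O(c)$. Second, check that the number of bags is $\O(n)$, since the oracle's bounds are stated in terms of $n$. For this I will argue that each recursive call of the adapted Reed algorithm either creates one bag and splits the vertex set into components, or terminates at a leaf. Standard compression of bags contained in their neighbours then keeps the number of bags linear in $n$ without increasing the width. If the decomposition has too many bags, I will apply this standard postprocessing, which costs $\O(cn)$ time per pass and is dominated by the other terms.

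I expect the main obstacle to be justifying that the black-box bounds of \cite{conrado2024faster} depend only on the width and on $n$, and not on, for example, the depth of $\mathcal{T}$ or a nice-decomposition format. If the oracle requires a balanced or logarithmic-depth decomposition, note that our recursion already has depth $\O(\log n)$, because every component satisfies $|C\cup S| \le \frac45|V| + 6c$. If it requires a nice tree decomposition, converting to one costs $\O(c\cdot n)$ time and preserves the width, so the claimed bounds still hold.
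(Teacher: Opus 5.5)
Your proposal is correct and follows the paper's argument: you feed the width-$\O(c)$ tree decomposition from the preceding theorem into the $\mathcal{T}$-EDO with $\tw\in\O(c)$, so the $\O(c^2 n\log n)$ decomposition time is absorbed by the $\O(c^3 n\log n)$ oracle construction, and the space and query time become $\O(cn\log n)$ and $\O(c)$. Your additional checks (linearly many bags after compression, logarithmic depth, conversion to a nice decomposition) are not spelled out in the paper, but they are sound and only make the black-box use of the oracle more careful.
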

Thus, compared to the $\mathcal{S}$-EDO data structure described in
\cite{deryckere2025wspd}, the $\mathcal{T}$-EDO data structure for $c$-packed
graphs is factor of $c$ slower in preprocessing time but a factor of $\log n$
faster with respect to query time, achieving query times that are independent of
the overall graph size.

\section{Experimental Results}
\label{sec:expe_results}
In this section, we first devise engineered algorithms for the computation of $\lambda$ and $c$.
Afterwards, we describe our implementations of the balanced separator and the tree decomposition algorithm in detail and evaluate them on  real-world road networks extracted from OSM, see \cref{tbl:instances} for selected instances.
Experiments are conducted on an AMD Ryzen Threadripper 1950X 16-Core Processor clocked at \SI{3.4}{\giga\hertz} with \SI{128}{\giga\byte} of~RAM.

\begin{table}[htb]
    \centering
    \begin{tabular}{lrrrrr}
         \bf Instance&
         $n=\bf\# vertices$&
         $m=\bf\#edges$&
         \bf density $\lambda$&
         \bf packedness $c$&
         $\lambda\sqrt n$\\
         \hline
 OSM-01 & \num{500}      & \num{540}      & \num{9}  & $[32,64]$     & \num{201}    \\
OSM-02 & \num{1000}     & \num{1084}     & \num{9}  & $[37,74]$     & \num{285}    \\
OSM-03 & \num{2000}     & \num{2190}     & \num{14} & $[45,90]$     & \num{626}    \\
OSM-04 & \num{5000}     & \num{5407}     & \num{14} & $[61,122]$    & \num{990}    \\
OSM-05 & \num{10000}    & \num{10748}    & \num{14} & $[78,156]$    & \num{1400}   \\
OSM-06 & \num{20000}    & \num{21394}    & \num{14} & $[103,206]$   & \num{1980}   \\
OSM-07 & \num{50000}    & \num{53700}    & \num{15} & $[162,324]$   & \num{3354}   \\
OSM-08 & \num{100000}   & \num{107180}   & \num{17} & $[224,448]$   & \num{5376}   \\
OSM-09 & \num{200000}   & \num{214178}   & \num{17} & $[316,632]$   & \num{7603}   \\
OSM-10 & \num{500000}   & \num{535115}   & \num{29} & $[475,950]$   & \num{20506}  \\
OSM-11 & \num{1000000}  & \num{1062928}  & \num{29} & $[597,1194]$  & \num{29000}  \\
City   & \num{121936}   & \num{231589}   & \num{46} & $[444,888]$   & \num{16063}  \\
State  & \num{346127}   & \num{680991}   & \num{34} & $[639,1278]$  & \num{20003}  \\
Cutout & \num{999591}   & \num{2120808}  & \num{64} & $[1447,2894]$ & \num{63987}  \\
Small  & \num{100242}   & \num{212220}   & \num{30} & $[452,904]$   & \num{9498}   \\
Medium & \num{4094608}  & \num{8287301}  & \num{96} & $[2120,4240]$ & \num{194257} \\
Large  & \num{20690320} & \num{21895407} & \num{48} & $\ge2643$ & \num{218336} \\
    \end{tabular}
    \caption{Characteristics of our benchmark graphs.}
    \label{tbl:instances}
\end{table}

\subsection{Efficient Computation of the Density}
Both the exact algorithm~\cite{deberg2002realistic}\footnote
{The paper states it takes as input a ``planar scene.''
\cite{gudmundsson2023approximatinglambda} interpreted this
to mean that no edge crossings are allowed.
However it only means that the objects lie in the plane (as in $\R^2$),
since the algorithm and its proof do not make
any mention of planarity (as in freedom of intersections).}
and the 3-approximation~\cite{gudmundsson2023approximatinglambda} for computing $\lambda$
use the same framework:
For each edge~$e$ we find
the set~$R(e)$ of edges that lie within~$2\|e\|$ of~$e$;
this takes time $\O(m\log m)$ with a quadtree.
For each edge $e$ and
each $p$ from a certain set~$P(e)$ of points,
we count the edges from $R(e)$ that lie within~$\|e\|$ of a point~$p$;
this is equivalent to a stabbing query
with input $\bigl\{\,e'\oplus S(\|e\|)\mid e'\in R(e)\,\bigr\}$,
where $\oplus$ denotes the Minkowski sum
and $S(r)$ the square (or ball) of radius~r about the origin.
Finally we return the maximum of these counts over all choices of $e$ and~$p$.
For each $e\in E(G)$,
the 3-approximation chooses $|P(e)|=48\,600\in\O(1)$ points,
while the exact algorithm computes
the arrangement of~$|R(e)|\in\O(\lambda)$ Minkowski sums
in time $\O\bigl(\lambda^2\bigr)$---at least on paper.
In our implementation we simply use
the $\O\bigl(\lambda^2\bigr)$ intersection points of the boundaries of
these Minkowski sums as~$P(e)$,
which results in a running time of $\O\bigl(\lambda^3\bigr)$ per choice of $e$
or $\O\bigl(\lambda^3m\bigr)$ in total.
However we believe
that the hidden constants in the running time of the arrangement construction
are larger than $\lambda$ for realistic instances.

We can organize these computations in several ways.
First we can process the edges $e$ by decreasing length
as suggested by~\cite{deberg2002realistic};
then we know that the current~$e$ admits a new count of at~most~$\lambda'+1$,
where $\lambda'$ is the previous maximum;
thus we can skip the remainder of $P(e)$ as soon as we find a point that lies
close to $\lambda'+1$ edges from $R(e)$.
Additionally we can skip~$e$ entirely if $|R(e)|\le\lambda'$;
in this case we cannot hope to find a new maximum;
this is ``full synchronization.''
We can hope to exploit the latter shortcut more often
if we process the edges~$e$ by decreasing $|R(e)|$.
Indeed, if we abandon the former shortcut entirely,
we need not wait for the result of the stabbing query for one choice of~$e$
before we start with the next;
that is, we may run the stabbing queries concurrently;
optionally with ``limited synchronization'' to skip edges~$e$ with small~$R(e)$.
This way the computation can run on up to $m$ CPU cores.
\begin{figure}
    \includegraphics[width=.5\textwidth,page=1]{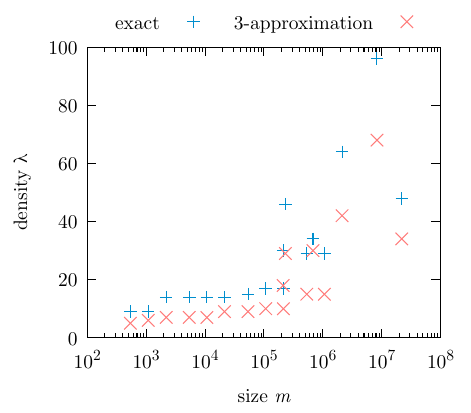}%
    \includegraphics[width=.5\textwidth,page=2]{density}
    \caption{The low-density constants of several road networks as computed by
    the exact algorithm and the 3-approximation as well as the respective
    running times.}
    \label{fig:density-results}
\end{figure}

When one examines the results in \cref{fig:density-results},
it is immediately clear that the 3-approximation
is actually slower than the exact algorithm.
We attribute the difference in running time to the fact
that all instances have relatively low density,
so $\lambda^3$ is in fact smaller than $48\,600\,\lambda$.
Additionally we observe that the 3-approximation is
usually worse than the optimum by a factor of about~1.5.
Finally we see that in all cases the variant
that runs on multiple cores with limited synchronization is the fastest.

\subsection{Efficient Computation of the Packedness}
For the packedness $c$, we implement the 2-approximation from~\cite{gudmundsson2023approximatingc} and the randomized  42-approximation approximation from \cite{harpeled2025how}.
The 2-approximation restricts the centers of the squares it considers to vertices of the graph
and sweeps outward from each vertex. It runs in~$\O(m^2\log m)$.
A key detail of this algorithm is the update operation it performs whenever the
sweep line hits an event point:
It updates the growth rate of the total length $\ell$ of edges inside the square
per additional unit of radius (i.e., $d\ell/dr$) in constant time.
An implementer must take care
that floating point imprecisions do not accumulate too much.
To guarantee this, we use Kahan's summation algorithm~\cite{kahan1965pracniques}
in our implementation.
Furthermore, since we run a sweep from each vertex independently,
the algorithm is embarrassingly parallel;
only  combining maximum accumulation at the end requires synchronization.

The randomized 42-approximation runs in $\O(m\log^3m)$~\cite{harpeled2025how}.
This algorithm uses a WSPD to find a linear number of squares
one of which approximates the packedness.
Then it randomly scales and shifts these squares $\Theta(n)$ times,
builds a quadtree of the translated squares,
and approximates the packedness of its nodes in a bottom-up traversal.
With high probability a square with high packedness
is approximately equal to a node in one of the quadtrees.
We may once again create and process the quadtrees
in parallel on up to $\Theta(\log n)$ cores.
To make the 42-approximation algorithm for packedness scalable to large graphs, we make the following adjustments:
 In practice the separation factor 7 200 of the WSPD as specified in the
paper is prohibitively large for instances of practical sizes. Other constants that control the approximation factor for the packedness of the quadtree lead to impractical running times as well. We changed them to the values shown in Table \ref{tbl:packedness-constants}. The result is a worse approximation
guarantee in theory, but our experiments show that the approximation factor is still always
below 5 in practice. 
\begin{table}[b]
\center
\begin{tabular}{lll}
	\bf Constant&
	\bf Value from paper&
	\bf Our implementation
\\ \hline
	$\alpha$&
	20 \cite[lemma~22]{harpeled2025how}&
	3
\\
	$\xi$&
	1/300 \cite[pages 10--11]{harpeled2025how}&
	1/30
\\
	$s$&
	7200 \cite[page~7]{gudmundsson2023approximatingc}&
	6
\\
	$m=\#{\rm quadtrees}$&
	$\O(1000\log n)$ \cite[lemma~10]{harpeled2025how}&
	$\lfloor\log_2 n\rfloor$
\end{tabular}
\caption{Constants from the 42-approximation as specified in the paper
	versus the values we use in our implementation.}
\label{tbl:packedness-constants}
\end{table}
\begin{figure}
    \includegraphics[width=.5\textwidth,page=1]{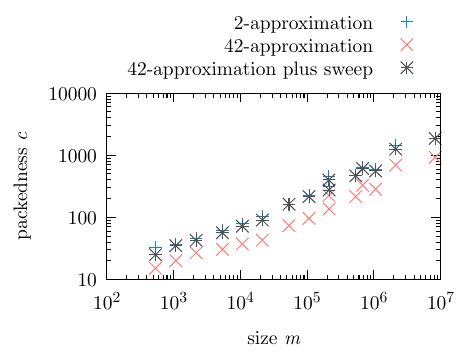}%
    \includegraphics[width=.5\textwidth,page=2]{congestion}
    \caption{The packednesses of several road networks as computed by
    the 2- and 42-approximation as well as the respective
    running times.}
    \label{fig:congestion-results}
\end{figure}

In \cref{fig:congestion-results} we can see
that the 42-approximation is much faster than the 2-approximation,
for large instances even by several orders of magnitude.
This comes at the cost of solution quality:
The 2-approximation usually finds a square that is twice as packed
as the estimate from the 42-approximation.
We can further increase the quality of the result in practice by computing the
exact packedness of the square output by the 42-approximation by running a
single sweep from its center as in the 2-approximation.
This takes only $\O(m\log m)$ time,
so it does not change the asymptotic running time;
but with this additional step we find squares
whose packedness is at~least \SI{75}{\percent} of those found by the 2-approximation
for our test instances,
and indeed the running time of the sweep is negligible
in comparison to the 42-approximation.
This way we get the best tradeoff between quality and running time.

For the two largest instances,
where the 2-approximation did not finish within twelve hours,
we report lower bounds on $c$
from the 42-approximation and partial runs of the 2-approximation
in \cref{tbl:instances}.
As one can see there, $c$ is always much larger than $\lambda$ for
the same graph, but also much smaller than  $\lambda \sqrt n$, indicating that
the known theoretical relationships between the two parameters are far from
tight on the tested road networks.

\subsection{A Practical Separation Algorithm}

One can conceptualize the separation algorithm proposed in \cite{deryckere2025wspd} as two independent steps: first, run a strategy yielding two balls which ensure a given balance factor, and second, extract vertices inside $B$ and outside~$B'$, and then run  a max-flow min-cut algorithm. Extracting vertices is done using a naive linear filtering. The chosen max-flow algorithm is Ford--Fulkerson with iterative depth-first searches to find augmenting paths, resulting in a running time of $\O(F \cdot (n+m))$ where $F$ is the maximum flow, which reduces to $\O(c^2 n)$ for $c$-packed graphs.
Because the first step is more flexible, we could use any of the methods shown in \cref{tab:sepsize}.
As a baseline and suggested in \cite{deryckere2025wspd}, and in order to compute $B = B_r$ and deduce $B' = B_{2r}$, the $\O(n(\frac{n}{k})^d)$-time 2-approximation for $k$-enclosing disk from \cite{harpeled2005fast} was implemented. Using $k = \frac{n}{2\lambda^3}$, it indeed yields a linear time algorithm, but in practice constant factors are substantial: in dimension~2, the running time is $\num{470596} n$.
Besides, as presented in \cref{sec:bal-sep}, the best balance factor using this 2-approximation is $0.997$ with \emph{Better Balance}. In practice, it yields exactly this balance; see \cref{fig:illus_sep_small_original_bb}.
\begin{figure}[ht]
 \begin{subfigure}[t]{0.48\textwidth}
        \centering
        \begin{tikzpicture}
            \node[inner sep=0, anchor=south west] (background) {\includegraphics[trim={5cm 0cm 5cm 0cm},clip,width=\textwidth]{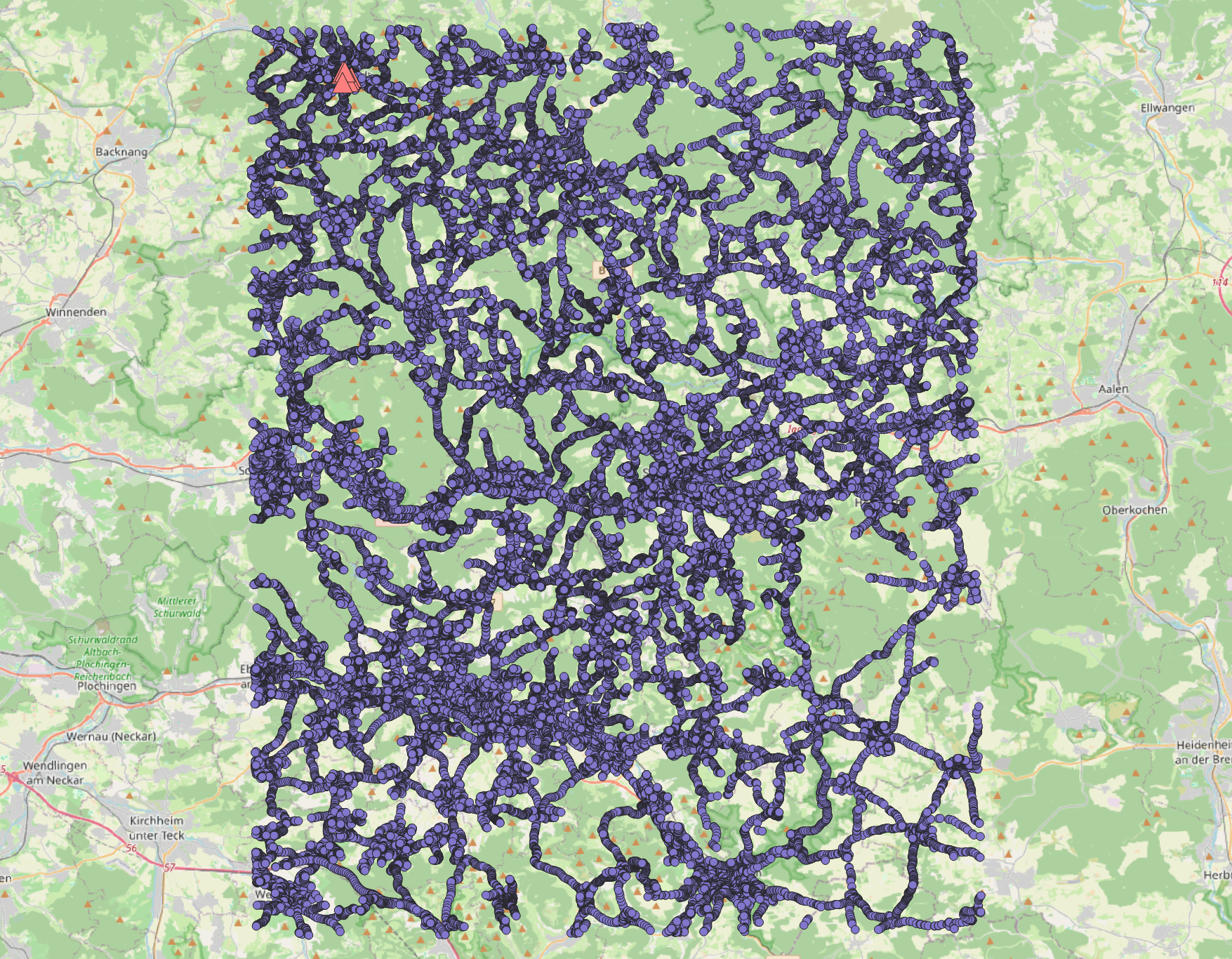}};
            \node[inner sep=0, anchor=south east] at (background.south east) {\includegraphics[trim={4cm 0cm 3cm 1cm},clip,width=.75\textwidth]{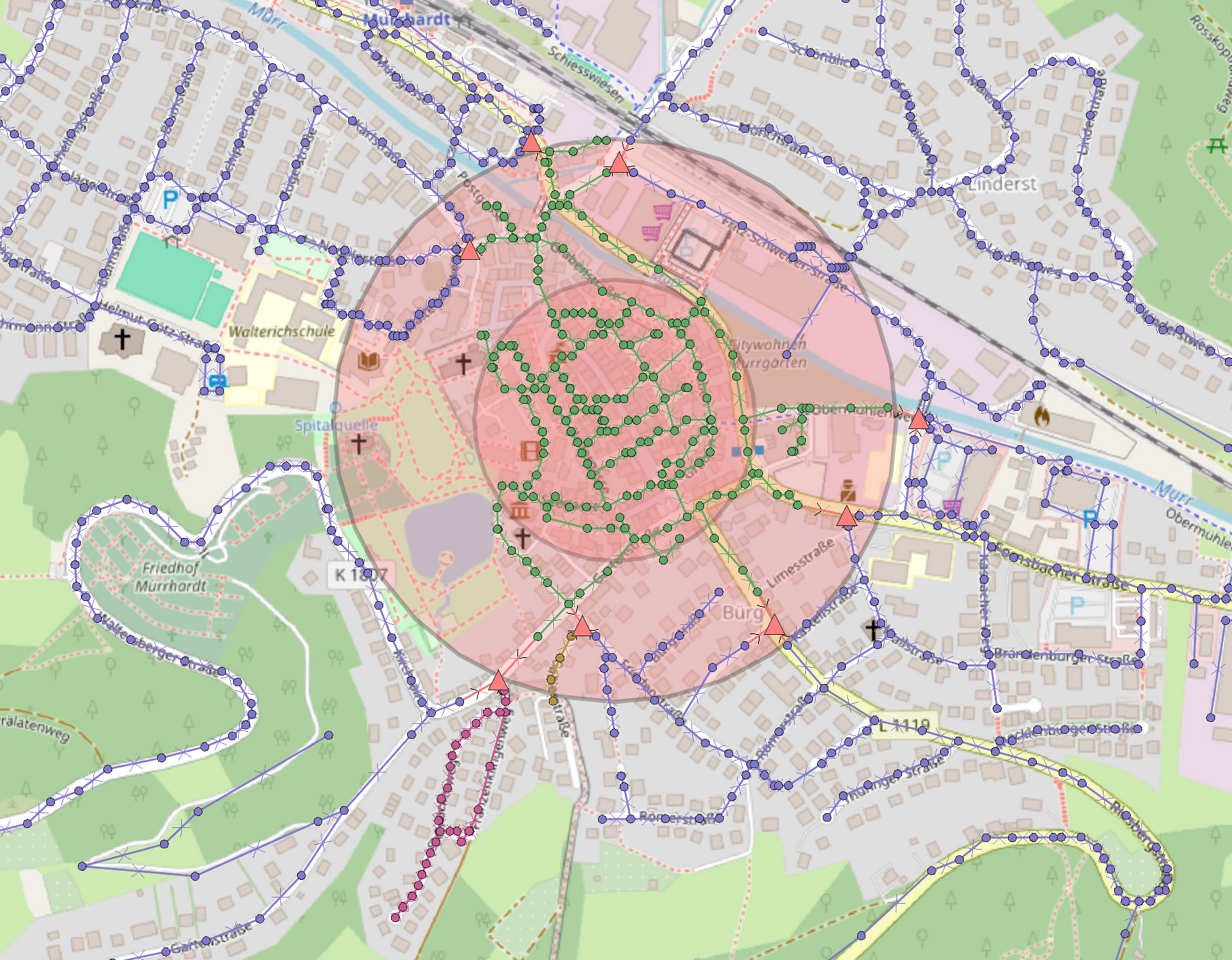}};
        \end{tikzpicture}
        \caption{Global view of the small graph and its separator in the top left corner, overlaid with a zoom on the separator, where one can also observe smaller components and balls.
        The largest component is of size \num{98870}, which indeed gives a balance of $0.997$.}
        \label{fig:illus_sep_small_original_bb}
    \end{subfigure}\hfill
    \begin{subfigure}[t]{0.48\textwidth}
        \centering
        \includegraphics[trim={7.6cm 0cm 7.6cm 0cm},clip,width=\textwidth]{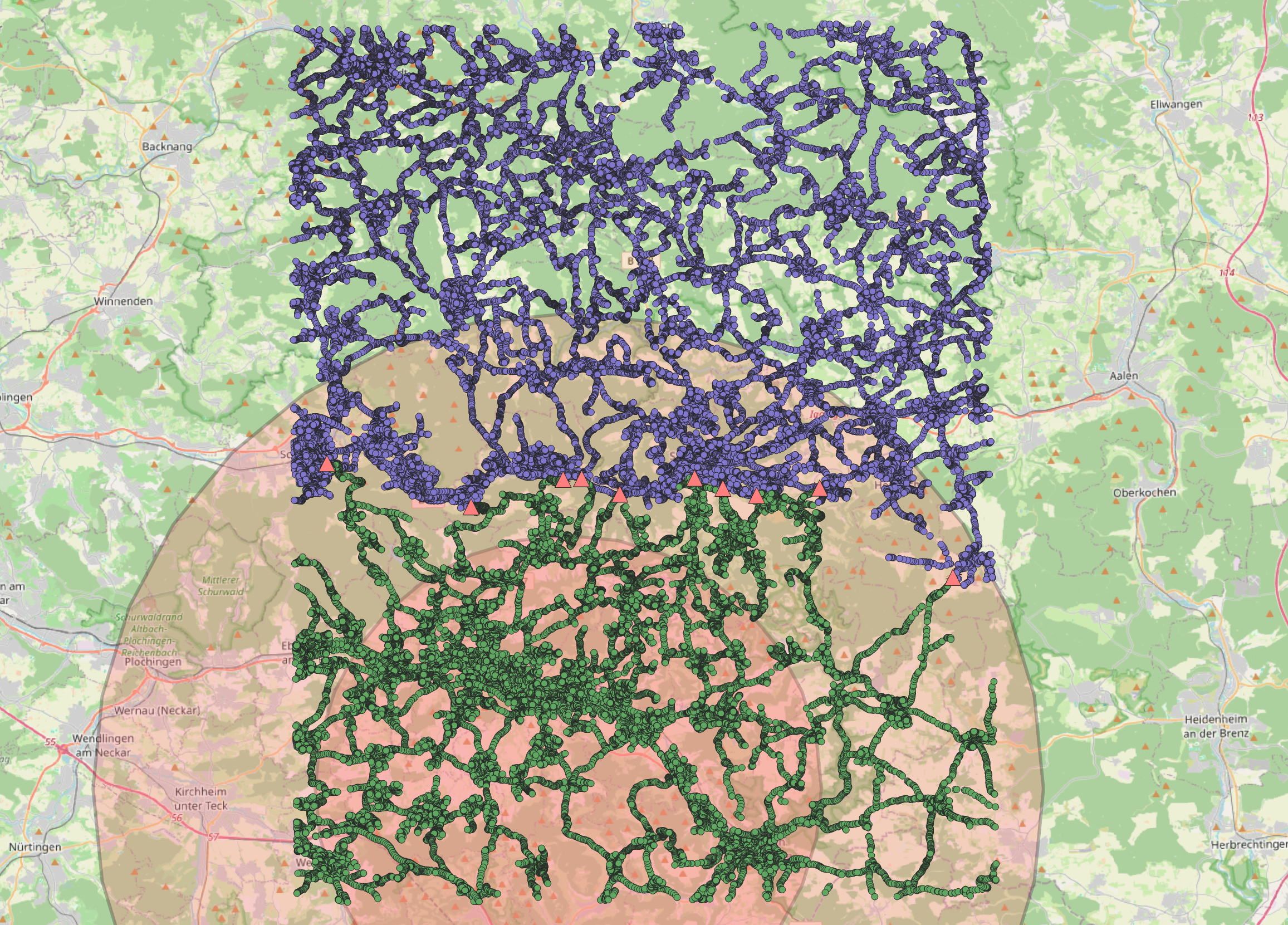}
        \caption{The resulting balance factor is $0.546$ with a very small separator of size 10. \emph{KaHIP} identifies almost the same separator in many runs.}
        \label{fig:illus_sep_small_good}
    \end{subfigure}
    \caption{Separation of the small graph using different ball strategies.
    On the left is the original one with \emph{Better Balance}.
    On the right is an output of the engineered sampling strategy, with $100$ vertices sampled and \SI{50}{\percent} probability of separating.}
    \label{fig:illus_sep_small}
\end{figure}

Instead a more practical but randomized ball strategy consists in sampling the
vertices, finding the exact $k$- and $(1-k)$-enclosing balls $B = b(p,r)$ and $B'
= b(p,R)$ from each vertex point $p$, and keeping the best ball pair.
This method also allows for arbitrary balance.
One could define the best pair to be the one
that guarantees the best theoretical upper bound $\frac{a}{a-1} c$ on the separator size,
maximizing the radius ratio $a = \frac{R}{r}$, or also
randomly run separations using intermediately computed balls and keep the one
yielding the smallest separator size.
To prevent this ratio $a$ from diverging, one could set a threshold and fall back to nested squares ball strategy if no suitable ball is found.
In practice, setting a threshold of $1.2$ still ensures a maximum separator size of $\frac{a}{a-1}c = 6c$, and such balls can always be found on the small graph; see \cref{fig:select_sep_Rr_relation}.
\begin{figure}[ht]
    \centering
    \includegraphics[width=0.65\textwidth]{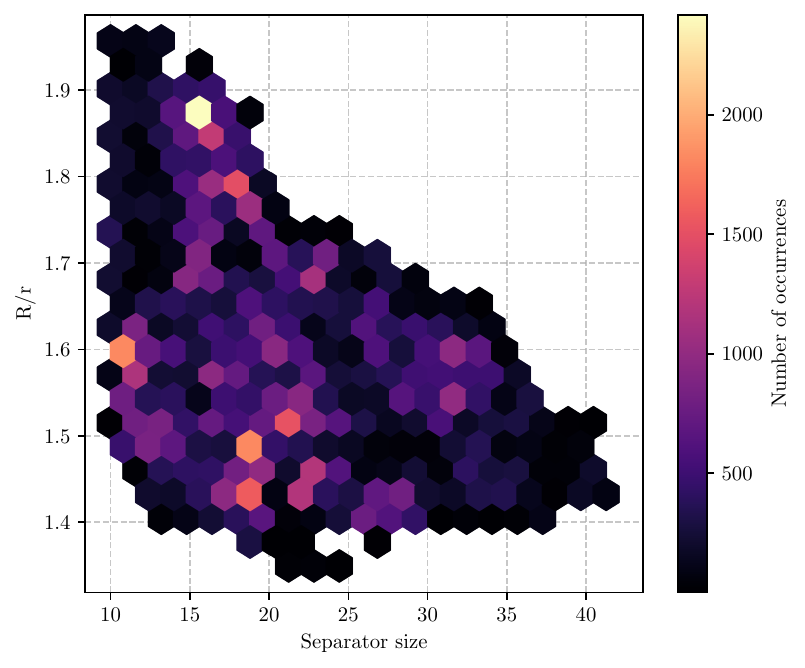}
    \caption{Density of separators w.r.t. ball radius ratio $a=\frac{R}{r}$ and separator size. This is done on the small graph, taking each vertex as a ball centre, and targeting a $\frac{2}{3}$ balance.}%
    \label{fig:select_sep_Rr_relation}
\end{figure}
This figure also illustrates the fact that maximizing $a$ induces smaller
separator sizes.
It also makes sense, because there should be more paths in the ring for the
max-flow algorithm to consider.
Consequently, during the sample strategy, one could accumulate and sort balls by
decreasing radius ratio $a$, and run separations by processing balls in order.
Combining both approaches, we then run the separation using the best theoretical pair of balls and keep the smallest separator.
Radii $r$ or $R$ can be found naively by computing distances from a sample
vertex $p$ to each vertex point. Selecting the $k$th farthest vertex point can
be done in $\O(n)$ thanks to a selection algorithm such as introselect. The ball
radius is set to the distance to this $k$th farthest vertex point.
This process can be greatly optimized by precomputing a spanning grid of $\log^2(n)$ cells
which holds their covered vertices. The query to find distance $r$ from $p$
consists in accumulating cells by exploring the grid outwards from the cell of
$p$. When at least $k$ vertices are accumulated, one only have to compute
distances to the last expanded annulus to determine the exact $k$th-farthest
vertex. See \cref{fig:grid-illus} for an illustration of this process.
To accelerate the process even further, distance computations can be
skipped at the cost of an approximation on $r$.
\begin{figure}[ht]
        \begin{subfigure}[t]{0.32\textwidth}
        \centering
        \includegraphics[page=10,width=\textwidth]{graphics/schematics.pdf}
        \caption{Pre-computation of the {\color{kn_seeblau}spanning grid} of $\log^2(n)$ cells. For the sake of clarity, more cells are used in this illustration.
        {\color{kn_seegruen}Sampled vertices} are highlighted.}
        \label{fig:grid-illus-step-grid}
    \end{subfigure}\hfill
    \begin{subfigure}[t]{0.32\textwidth}
        \centering
        \includegraphics[page=12,width=\textwidth]{graphics/schematics.pdf}
        \caption{First step of the grid exploration from the cell containing {\color{kn_seegruen}one sampled vertex $p$}, in order to find the radius $r$ of the inner ball $B$ centred at ${\color{kn_seegruen}p}$.}
        \label{fig:grid-illus-step-explo1}
    \end{subfigure}\hfill
    \begin{subfigure}[t]{0.32\textwidth}
        \centering
        \includegraphics[page=13,width=\textwidth]{graphics/schematics.pdf}
        \caption{Second step of the grid exploration, accumulating vertices in cells of the next ring. It is also the last expansion, supposing enough vertices ($(1-\alpha)n$) were accumulated.}
        \label{fig:grid-illus-step-explo2}
    \end{subfigure}
    \begin{subfigure}[t]{0.32\textwidth}
        \centering
        \includegraphics[page=14,width=\textwidth]{graphics/schematics.pdf}
        \caption{Distances from ${\color{kn_seegruen}p}$ to vertices in the {\color{kn_seeblau}last expanded ring} are computed.}
        \label{fig:grid-illus-step-dist-comp}
    \end{subfigure}\hfill
    \begin{subfigure}[t]{0.32\textwidth}
        \centering
        \includegraphics[page=15,width=\textwidth]{graphics/schematics.pdf}
        \caption{From previously computed distances, the exact $(1-\alpha)n$\textsuperscript{th}-farthest vertex is identified, and hence $r$, giving the complete inner ball ${\color{kn_peach}B({\color{kn_seegruen}p},r)}$.}
        \label{fig:grid-illus-step-inner-ball}
    \end{subfigure}\hfill
    \begin{subfigure}[t]{0.32\textwidth}
        \centering
        \includegraphics[page=16,width=\textwidth]{graphics/schematics.pdf}
        \caption{The outer ball ${\color{kn_bordeaux65}B' = B({\color{kn_seegruen}p},R)}$ is found in the same way. The grid exploration is resumed until $\alpha n$ vertices are accumulated, distances to the last expanded ring are computed, and $R$ is deduced.}
        \label{fig:grid-illus-step-outer-ball}
    \end{subfigure}
    \caption{Illustration of the \emph{sampling} ball strategy using a precomputed spanning grid.}
    \label{fig:grid-illus}
\end{figure}
To determine the necessary parameters, namely sample size and expected number of separation runs, we benchmarked several configurations on a representative road network.
According to results shown \cref{fig:benchmark_sampling_smalllest_sep}, a sample of size 100 with 50 separation runs seems to yield the smallest separator for  a balance factor of  $\alpha=\frac23$ almost every time. The smallest separator is also found in the 50 first tries \SI{75}{\percent} of the time with larger separation runs; see \cref{fig:benchmark_sampling_rank_smallest_sep}.
\begin{figure}[ht]
    \begin{subfigure}[t]{\textwidth}
        \centering
        \includegraphics[width=0.48\textwidth]{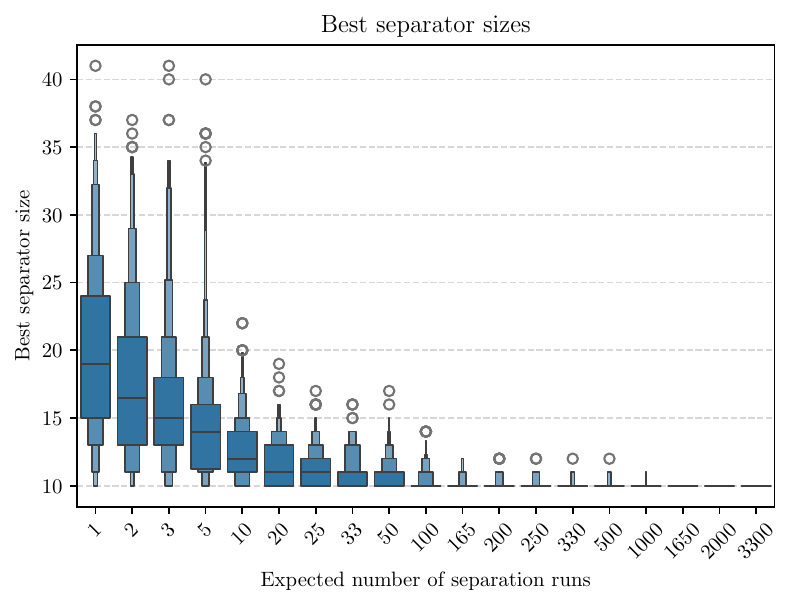}
        \includegraphics[width=0.48\textwidth]{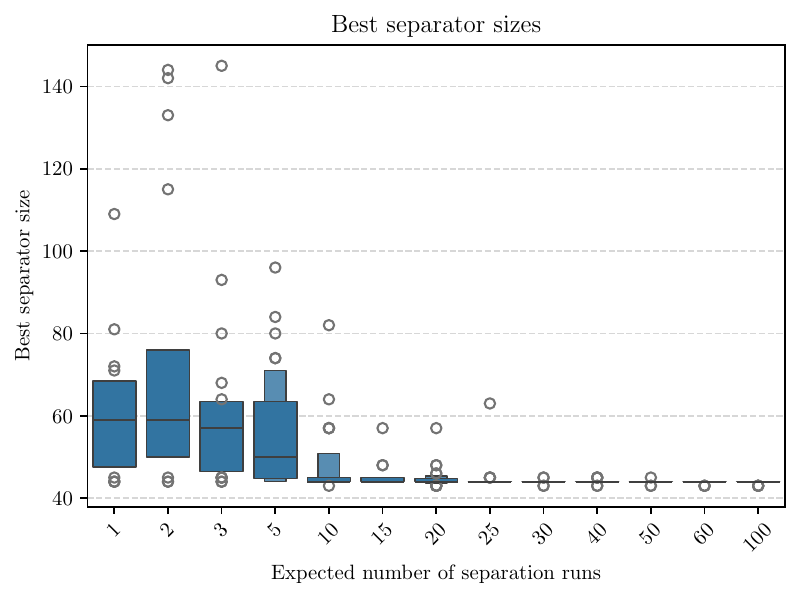}
        \caption{Smallest separator sizes w.r.t. number of samples and expected number of separation runs.}
        \label{fig:benchmark_sampling_smalllest_sep}
    \end{subfigure}
    \begin{subfigure}[t]{\textwidth}
        \centering
        \includegraphics[width=0.48\textwidth]{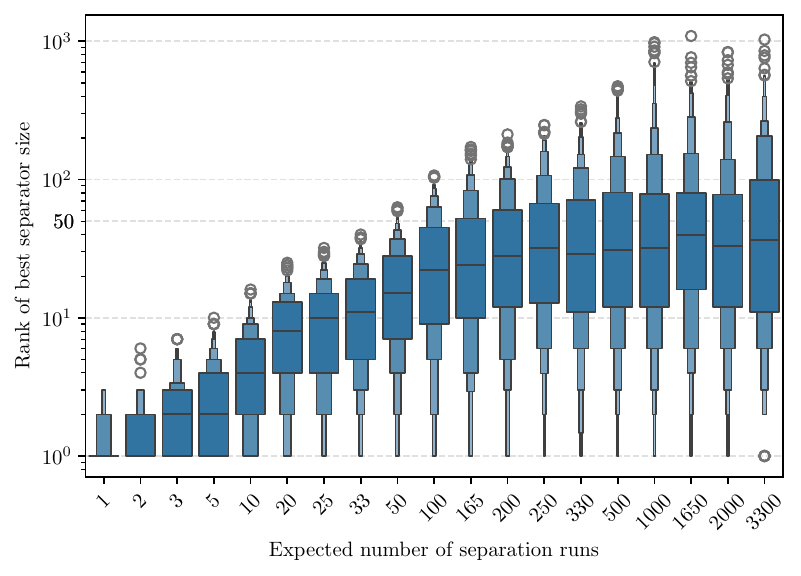}
        \includegraphics[width=0.48\textwidth]{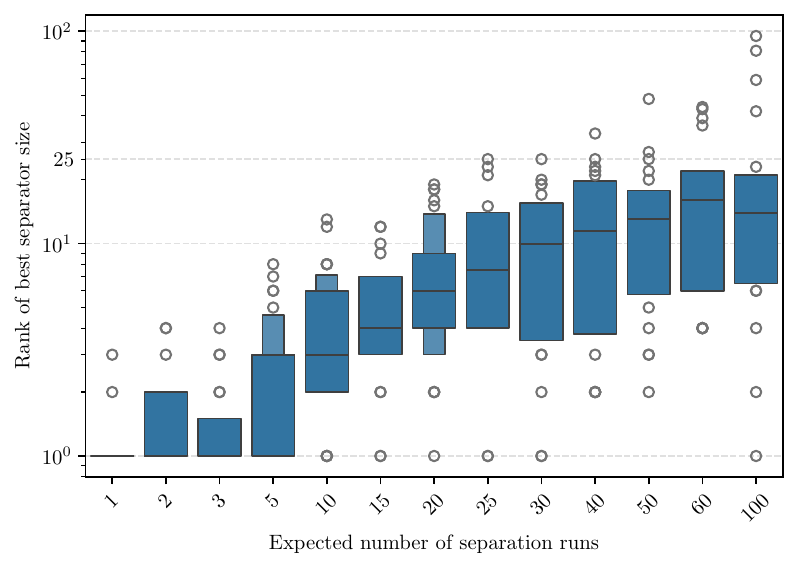}
        \caption{Rank of smallest separator w.r.t. expected number of separation runs.}
        \label{fig:benchmark_sampling_rank_smallest_sep}
    \end{subfigure}
    \caption{Benchmark results to determine best sampling parameters for the corresponding ball strategy. Left is the small instance, and right is the large one.}
\end{figure}

In order to assess the  practical performance of this separation algorithm, we compare its performance against the heavy-engineered KaHIP \cite{sanders2013think,sanders2016advanced}.
To make it a fair fight, we first separate the graph using our algorithm and measure the balance factor. Then, we separate using KaHIP by asking two components and giving the same balance as target. Its mode is also set to \texttt{FAST} or \texttt{STRONG}, which enables a compromise between separation time and size.
On the small graph, our algorithm yields better separator sizes while using less time at 10 to 35 separation runs when compared to KaHIP-\texttt{FAST}, and at 30 to 110 separation runs when compared to KaHIP-\texttt{STRONG}; see \cref{fig:select_ball_strategy_benchmarks}.
On the large graph, up to 100 separation runs performed during the ball strategy execution, the separation is either slower but of higher quality against KaHIP-\texttt{FAST}, or faster while yielding roughly the same size against KaHIP-\texttt{STRONG}.
In the context of nested dissection, only a very few amounts of separation are done on large subgraphs, though, and most of the separations are performed on smaller graphs, closer to the previous, small one.
\begin{figure}[!ht]
    \includegraphics[width=0.49\textwidth]{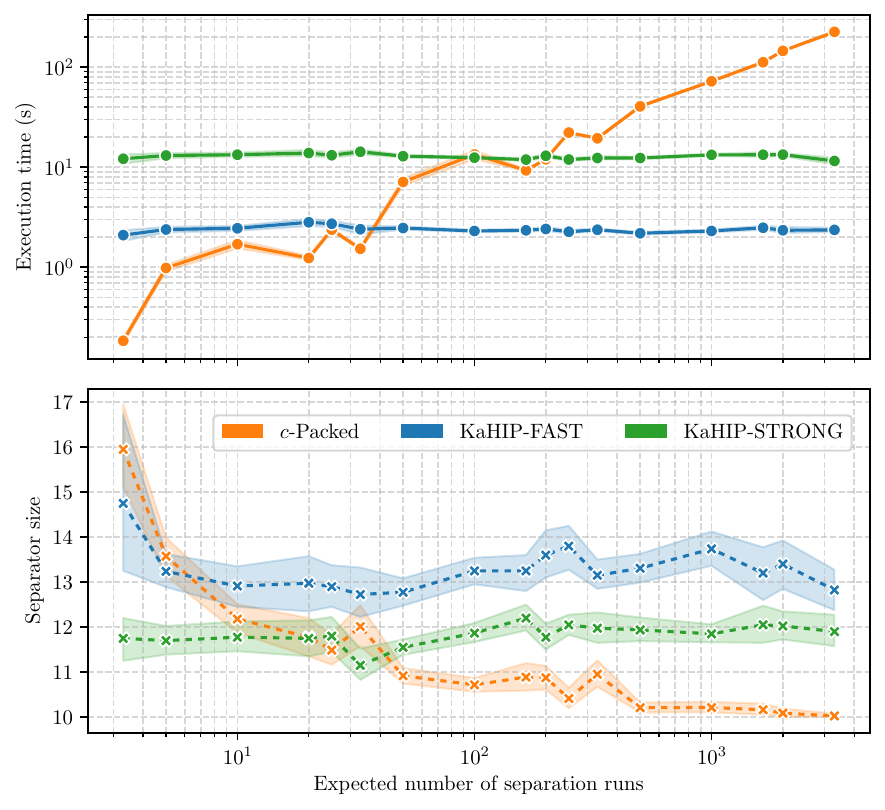}\hfill
    \includegraphics[width=0.49\textwidth]{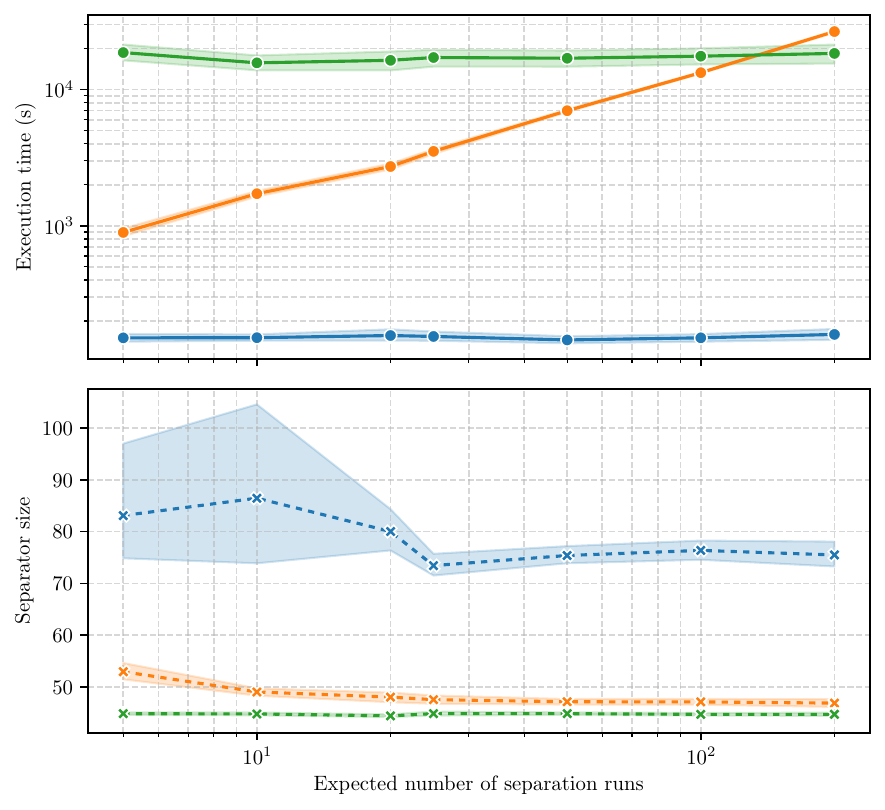}
    \caption{Time and separator size comparison between our separation algorithm and KaHIP in \texttt{FAST} and \texttt{STRONG} modes, w.r.t.\ the expected number of separation runs, for the small dataset (left) and the large dataset (right).
    The $c$-packed balance factor in the ball strategy is always set to $\frac23$.
    For the large dataset, two KaHIP separator outliers (larger than \num{1000}) were removed, one was of size \num{54000}. This again shows the importance of having a theoretical guarantee on the separator size.}
    \label{fig:select_ball_strategy_benchmarks}
\end{figure}

\subsection{Exact Distance Oracles}
The separator decomposition was implemented just as described \cref{sec:edo}.
using our separator algorithm described above as a subroutine.
The tree decomposition was fully implemented as described in \cref{subsec:tree_dec_cpacked}. A small optimization
was added: instead of recurring on $\bigl(G[C \cup S],(C \cap W) \cup S\bigr)$, we extract
the set $N(C) \cap S$---vertices of $S$ connected to
$C$---in time $\O(|S| \Delta)$ and use it in place of $S$.
This optimization resulted in an average bag size smaller by a factor of $\frac13$. 
Both structures have been constructed on the large dataset. Regarding the pre-computation, they both took around \SI{4.5}{\hour} to compute, without any parallelization.
The computed separator decomposition has a height of 24 with \num{1700000} bags, whereas the tree decomposition is 19 nodes high and is made of \num{1200000} bags. 
As discussed earlier \cref{sec:edo}, those two data structures can be used to define Exact Distance Oracles, namely $\mathcal{S}$-EDO and $\mathcal{T}$-EDO.
In order to finally compare them, we investigate the number of distance comparisons done at query phase---which is expected to be smaller by a factor of $\O(\log n)$ for $\mathcal{T}$-EDO. The query time should be proportional.
We can observe this only from the separation and tree decompositions, based on bag sizes.
Results indeed show a clear domination of the $\mathcal{T}$-EDO, as show on \cref{fig:sedo_vs_tedo_query}. On average, it requires only 20 distance comparisons to answer a query, whereas the $\mathcal{S}$-EDO needs a bit less than 4 times more. The \SI{75}{\percent} quantile of $\mathcal{T}$-EDO ($14$) is below the \SI{25}{\percent} quantile of $\mathcal{S}$-EDO ($43$).
\begin{figure}[!ht]
    \centering
    \includegraphics[width=0.7\linewidth]{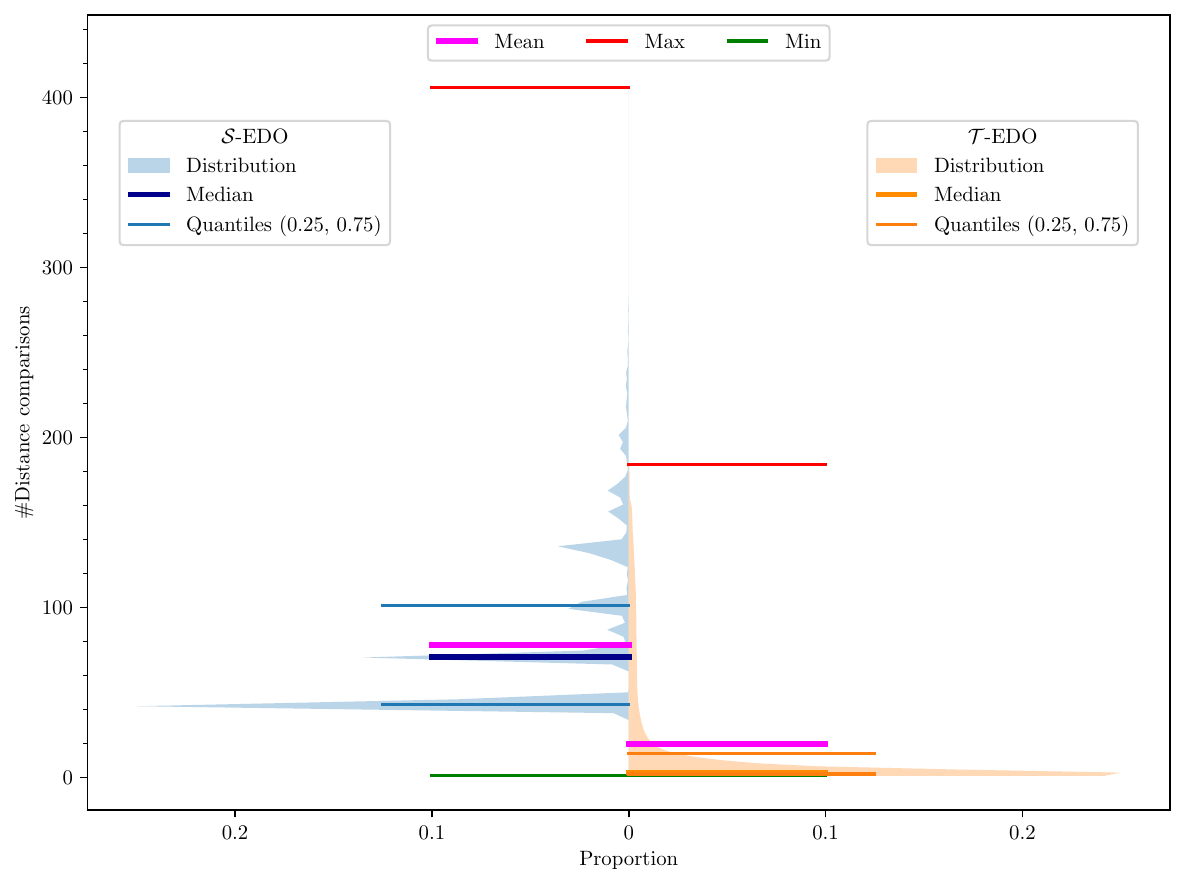}
    \caption{Distribution of the number of distance comparisons required for a query, on average, between $\mathcal{S}$-EDO and $\mathcal{T}$-EDO.}
    \label{fig:sedo_vs_tedo_query}
\end{figure}
The tree decomposition can be used as an intermediate step to solve many more problems.
Its very small average bag size might even make algorithms with exponential dependency on $\tw$ practical.


\section{Conclusions and Future Work}
Our experiments reveal that $\lambda$ and $c$ are small enough in large real-world road networks for parameterized bounds to be meaningful. Especially $\lambda$ values turned  out to be really low, which motivates further research into algorithms parameterized by it. 
We also show that such parameterized algorithms can be engineered to perform very well in practice while upholding theoretical guarantees. In particular, our exact distance oracle with $\mathcal{O}(c)$ query time requires only few  distance computations on average. Indeed, even the maximum oracle time is much better than the computed $c$-value which might warrant further investigation.

Balanced separator computation is at the core of our implemented algorithms.
Very recently, Har-Peled \cite{harpeled2026separator} presented a simple randomized
algorithm that computes balanced separators of size $\mathcal{O}(c)$ in expected
$\mathcal{O}(cn)$ time. The algorithm is closely related to separator
constructions for $\tau$-lanky and $\lambda$-low-density graphs
\cite{le2024greedy} and avoids the max-flow computations that currently dominate
the running time of our implementation. Consequently, it has the potential to
significantly accelerate our algorithms, although it remains to be seen whether
it can match the separator quality achieved by our approach in practice.
Incorporating Har-Peled's algorithm into our tree decomposition algorithm
reduces its running time to $\mathcal{O}(cn \log n)$. The preprocessing time of
the exact distance oracle, however, would still be dominated by the subsequent
distance computations, which require $\mathcal{O}(c^3 n \log n)$ time. Reducing
this dependence on $c$ remains an interesting open problem.

\clearpage
\bibliography{refs}

\newpage
\appendix
\section{Improved Approximation Algorithm for Smallest Enclosing Square}%
\label{apx:enclosing-square}
Given a set of $n$ points in the plane and a parameter $k$, the goal is to compute an axes-parallel square with minimum side length  that encloses $k$ points. Smid \cite{smid1992finding} presented an algorithm for this problem with a running time in $\mathcal{O}(n \log n +  n k \log^2 k)$ and a linear space consumption. Datta et al.~\cite{datta1995static} reduced the running time to $\mathcal{O}(n \log n +  n \log^2 k)$. Chan \cite{chan1998geometric} described a randomized algorithm that runs in $\mathcal{O}(n \log n)$ for arbitrary $k$ and Mahapatra \cite{mahapatra2011k} a deterministic one with a running time in $\mathcal{O}(n \log^2 n)$.   Using  this exact algorithm directly for  our separator computation  in $c$-packed graphs yields a separator size of~$|S_{0.8}| = 2 \sqrt 2 c$, and accordingly we get $|S_{2/3}|= 4 \sqrt 2 c$ in $\mathcal{O}(c^2 n + n \log ^2 n)$ time.

However, for the purpose of more efficient separator computation, we strive for a linear running time but are also content with a $(1+ \varepsilon)$-approximation, that is, we can accept a slightly larger than optimal square. For disks this was shown to be possible in \cite{harpeled2005fast}, where an algorithm with a running time in  $\O(n + \frac1{\varepsilon^3}
\log^2 \frac1\varepsilon)$ for $\varepsilon > 0$ and $k \in \Omega(n)$ was presented. In the following, we discuss that this algorithm can be applied to axes-parallel squares with only minor modifications. As squares are arguably even simpler to treat, we decrease the running time to $\O(n + \frac1{\varepsilon^2}
\log \frac1\varepsilon)$ in the process.

The algorithm works as follows: First, it computes a $2$-approximation of the
optimal radius $r_{OPT}$ in linear time. This algorithm does not rely on disk
geometry and  directly transfers to squares. The returned radius $r \leq 2
r_{OPT}$ is used to construct a grid. It is argued that each grid cell can be
covered by at most 5 disks of radius $r/2$and thus the maximum number of points
contained in each grid cell is upper bounded by $5k$. For squares, this bound
reduces to $4k$. Clearly, the optimal disk/square can not extend beyond a  $3
\times 3$ grid cell cluster, as the radius would then exceed $2r > 2r_{OPT}$.
Therefore, it suffices to inspect each such cluster which contains at least $k$
points. The current cluster is further subdivided using a grid of side length
$\frac{1}{\varepsilon}r$ and the cluster points are snapped to their closest
grid points. Thus, we go from at most $45k$ distinct points in the cluster
($36k$ for squares) to $\mathcal{O}\left(\frac{1}{\varepsilon^2}\right)$ snapped
points. We know strive the find the smallest $k$-enclosing disk/square for the
snapped points, respecting the multiplicities of points with the same
coordinate. For squares, it is easy to see that we can restrict at least one
pair of diagonally opposing corners do be grid points. Thus for each grid point,
we assume it is w.l.o.g. the lower left corner of the square (and then repeat
the process for lower right, upper left and upper right corner). We then want to
find the smallest side length such that the resulting square contains at least
$k$ points. We do so via binary search over the possible side lengths. To
quickly answer how many points are contained in such a square, we use a
2-dimensional prefix sum array which can be precomputed for the grid point
counts in linear time and which returns the correct value in $\mathcal{O}(1)$.
Thus, for each grid point and corner role, we need $\mathcal{O}\left(\log
\frac{1}{\varepsilon}\right)$ time and
$\mathcal{O}\left(\frac{1}{\varepsilon^2}\log \frac{1}{\varepsilon}\right)$ in
total. For disks, the data structures that report the point counts are more
complicated and thus the dependency on $\varepsilon$ is worse.
Returning the smallest square seen in this process is a valid
$(1+\varepsilon)$-approximation, as we only need to enlarge the square by at
most this factor to enclose all relevant original points to fit the bound of
$k$. All other calculations from \cite{harpeled2005fast} regarding the running
time analysis (including how often clusters with many points occur) are
oblivious to the disk geometry and carry over unchanged to squares.
Therefore, we get the following theorem.

\begin{theorem}
    Given a set of points in the plane, $k \in \Theta(n)$, and $\epsilon>0$, we can compute a
    $(1+\varepsilon)$-approximation of the smallest axis-aligned $k$-enclosing
    square in time $\O(n + \frac1{\varepsilon^2} \log \frac1\varepsilon)$.
\end{theorem}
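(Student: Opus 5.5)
The plan follows the framework of Har-Peled and Mazumdar for disks~\cite{harpeled2005fast} and replaces only the geometric parts by square-specific ones. Throughout, squares are measured in the $L_\infty$ radius.

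\emph{Step 1: constant-factor approximation.} Compute a radius $r$ with $r_{OPT}\le r\le 2r_{OPT}$ in $\O(n)$ time using the linear-time $2$-approximation of~\cite{harpeled2005fast}. Its analysis only uses packing and covering arguments with a constant number of shapes of half the size. For axis-aligned squares these arguments hold with constant $4$ instead of $5$ or $7$, because a square of radius $\rho$ is exactly the union of $4$ squares of radius $\rho/2$. I will re-verify each covering inequality for squares but expect no changes beyond constants.

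\emph{Step 2: localization.} Lay a grid of cell side $2r$. Any square of radius $r_{OPT}\le r$ meets at most a $2\times 2$ block of cells, so the optimum lies inside some $3\times 3$ cluster. Each cell is covered by $4$ squares of radius $r/2\le r_{OPT}$. Each such square holds fewer than $k$ points unless $r/2=r_{OPT}$, a boundary case I handle by taking the minimal radius with the non-strict inequality. Hence a cell holds at most $4k$ points and a cluster at most $36k$. Bucket the points by hashing the floor function, which takes $\O(n)$ time. Only clusters with at least $k$ points matter. There are $\O(n/k)=\O(1)$ of them since $k\in\Theta(n)$, and they have total size $\O(n)$.

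\emph{Step 3: snapping and exact search.} Inside a cluster, refine to a grid of spacing $\varepsilon r/c_0$ for a suitable constant $c_0$. This gives $\O(1/\varepsilon^2)$ grid points, with multiplicities counted in linear time. Build a 2D prefix-sum table over them, so that the number of snapped points in any grid-aligned square is available in $\O(1)$. Any square $Q$ can be shrunk or shifted so that a corner lies on a grid point and its side is a multiple of the spacing. For each grid point and each of the four corner roles, binary search over the $\O(1/\varepsilon)$ admissible side lengths for the smallest one containing at least $k$ snapped points; monotonicity in the side length justifies the search. This costs $\O(\frac1{\varepsilon^2}\log\frac1\varepsilon)$ per cluster. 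Return the best square enlarged by $2\varepsilon r/c_0$ per side.

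\emph{Correctness, the main obstacle.} Snapping moves each point by at most half the spacing in each coordinate. Consider the optimal square $Q^*$. It can be rounded outward to a grid-aligned square whose side exceeds $2r_{OPT}$ by $\O(\varepsilon r/c_0)$ and which contains the snapped images of all points in $Q^*$. So the search finds a grid square of side at most $2r_{OPT}+\O(\varepsilon r)$ with $k$ snapped points. Un-snapping requires the enlargement, which again costs $\O(\varepsilon r/c_0)$. Since $r\le 2r_{OPT}$, the total is $(1+\varepsilon)$ times the optimum for suitable $c_0$. The delicate part is this bookkeeping in both directions of the snapping. I would handle it by counting each original point by its cell, so that the containment claims are monotone.

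Summing up, the running time is $\O(n)$ for Steps 1 and 2 plus $\O(1)$ clusters at $\O(\frac1{\varepsilon^2}\log\frac1\varepsilon)$ each, which gives the bound in the theorem.
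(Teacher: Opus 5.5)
Your proposal is correct and follows essentially the same route as the paper: the linear-time $2$-approximation of \cite{harpeled2005fast}, a grid whose cells are each covered by four squares of half the radius, restriction to $3\times 3$ clusters containing at least $k$ points, snapping to an $\O(\varepsilon r)$-spaced grid, and a binary search per grid point and corner role using a 2D prefix-sum table. Your explicit bound of $\O(n/k)=\O(1)$ heavy clusters and your two-sided snapping error analysis with the constant $c_0$ spell out details that the paper only defers to \cite{harpeled2005fast}, and they also correct the paper's misprinted grid spacing $\frac1\varepsilon r$.
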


\end{document}